\newcommand{\noun}[1]{\textsc{#1}}
\providecommand{\tabularnewline}{\\}
\DeclareMathOperator{\D}{D}
\DeclareMathOperator{\Df}{D\!}
\DeclareMathOperator{\car}{car}
\DeclareMathOperator{\sub}{sub}
\DeclareMathOperator{\csub}{csub}
\DeclareMathOperator{\bin}{bin\!}
\DeclareMathOperator{\und}{\mbox{undefined}}
\begin{document}
\markboth{Vojt\v{e}ch Vorel}{Subset Synchronization and Careful Synchronization of Binary Finite Automata}
\title{\uppercase{Subset Synchronization and Careful Synchronization of Binary Finite Automata}\footnote{Research supported by the Czech Science Foundation grant GA14-10799S and the GAUK grant No. 52215.}}

\author{\uppercase{Vojt\v{e}ch Vorel}}
\address{Department of Theoretical Computer Science and Mathematical Logic,\\
Charles University,\\ Malostransk\'{e} n\'{a}m. 25, Prague, Czech Republic\\ vorel@ktiml.mff.cuni.cz}

\maketitle

\begin{abstract} We present a strongly exponential lower bound that applies both to the subset synchronization threshold for binary deterministic automata and to the careful synchronization threshold for binary partial automata. In the later form, the result finishes the research initiated by Martyugin (2013). Moreover, we show that both the thresholds remain strongly exponential even if restricted to strongly connected binary automata. In addition, we apply our methods to computational complexity. Existence of a subset reset word is known to be  PSPACE-complete; we show that this holds even under the restriction to strongly connected binary automata. The results apply also to the corresponding thresholds in two more general settings: D1- and D3-directable nondeterministic automata and composition sequences over finite domains.
\keywords{Reset Word, Directing Word, Synchronizing Word, Composition Sequence, \v{C}ern\'{y} Conjecture} \end{abstract}

\section{Introduction}

Questions about synchronization of finite automata have been studied
since the early times of automata theory. The basic concept is very
natural: For a given machine, we want to find an input sequence that
would get the machine to a unique state, no matter in which state
the machine was before. Such sequence is called a \emph{reset word}%
\footnote{Some authors use the terms \emph{synchronizing word} or \emph{directing
word}.%
}. If an automaton has some reset word, we call it a \emph{synchronizing}
\emph{automaton}. For deterministic automata these definitions are
clear, while for more general types of automata (partial, nondeterministic,
probabilistic, weighted,...) there are multiple variants, each of
its own importance. Several fields of mathematics and engineering
deal with such notions. Classical applications (see \citep{VOL1short})
include model-based testing of sequential circuits, robotic manipulation,
symbolic dynamics, and design of noise-resistant systems \citep{DEY1},
but there are important connections also with information theory \citep{TRS1}
and with formal models of biomolecular processes \citep{BON1}. 

Two particular problems concerning synchronization has gained some
publicity: the Road Coloring Problem and the \v{C}ern\'{y} Conjecture.
The first has been solved by Trahtman \citep{TRA6ijm} in 2007 by
proving that the edges of any aperiodic directed multigraph with constant
out-degree can be colored such that a synchronizing deterministic
automaton arises. Motivation for the Road Coloring Problem comes from
symbolic dynamics \citep{ADL1}. On the other hand, the \v{C}ern\'{y}
Conjecture remains open since 1971 \citep{CER2}. It claims that any
$n$-state synchronizing deterministic automaton has a reset word
of length at most $\left(n-1\right)^{2}$. It is known that there
is always a reset word of length at most $\frac{n^{3}-n}{6}$ \citep{PIN2}%
\footnote{An improved bound published by Trahtman \citep{TRA1} in 2011 has
turned out to be proved incorrectly, see \citep{GON1}.%
}.

We focus on two specific synchronization problems that both generalize
the problem represented by the \v{C}ern\'{y} Conjecture. It may be
surprising that in both the cases the corresponding bounds become
exponential:
\begin{itemlist}
\item In a \emph{partial automaton}, each transition is either undefined
or defined unambiguously. A\emph{ careful reset word} of such automaton
maps all the states to one unique state (using only the defined transitions).
The problem is, for given $n$, how long a shortest careful reset
word of an $n$-state partial automaton may be? It is known that in
general it may have strongly exponential length, i.e., $2^{\Omega\left(n\right)}$.
\item Given a deterministic automaton and a subset $S$ of its states, a
\emph{reset word of $S$} maps all the states of $S$ to one unique
state. The problem is, for given $n$, how long a shortest reset word
of a subset of states in an $n$-state deterministic automaton may
be? Again, it is known that in general it may have length $2^{\Omega\left(n\right)}$.
Note that the deterministic automata under consideration do not need
to be synchronizing.
\end{itemlist}
Both these questions, concerning also some variants and restrictions,
have been studied since 1970's but until recently none of the presented
bad cases (i.e., lower bounds) have consisted of automata with two-letter
or any other fixed-size alphabets. The definitions above allow the
alphabet to grow with growing number of states, which offers a very
strong tool for constructing series that witness strongly exponential
lower bounds. 

In 2013 Martyugin \citep{MAR5} gives a lower bound of the form $2^{\Omega\left(\frac{n}{\log n}\right)}$
that applies to both the problems restricted to automata with two-letter
alphabets (i.e., \emph{binary automata}). Moreover, for each fixed
alphabet size $m\ge2$, the author of \citep{MAR5} provides a specific
multiplicative constant in the exponent. There is a simple construction
(Lemma \ref{lem: bin}) guaranteeing that existence of a lower bound
of the form $2^{\Omega\left(n\right)}$ for any particular alphabet
size $m$ implies a lower bound of the same form for any other $m\ge2$
as well. However it remained as an open question if the lower bounds
can be raised to $2^{\Omega\left(n\right)}$ in such a way.

In the present paper we give the answer: We present a lower bound
of the form $2^{\Omega\left(n\right)}$ that applies to both the problems
restricted to binary automata. Moreover, we introduce a technique
for applying the lower bounds even under the restriction to strongly
connected binary automata. The main results are expressed by Theorem
\ref{thm: main} in Section \ref{sec:The-New-Lower Bounds}.

\section{Preliminaries\label{sec:Preliminaries}}

\subsection{Partial finite automata}

A \emph{partial finite automaton }(\emph{PFA})\emph{ }is a triple
$A=\left(Q,X,\delta\right)$, where $Q$ and $X$ are finite sets
and $\delta:Q\times X\rightarrow Q$ is a partial function. Elements
of $Q$ are called \emph{states}, $X$ is the \emph{alphabet}. The
\emph{transition function} $\delta$ can be naturally extended to
$Q\times X^{\star}\rightarrow Q$ by defining
\[
\delta\!\left(s,vx\right)=\delta\!\left(\delta\!\left(s,v\right),x\right)
\]
inductively for $x\in X,v\in X^{\star}$ if the right-hand side is
defined. We extend $\delta$ also with
\[
\delta\!\left(S,w\right)=\left\{ \delta\!\left(s,w\right)\mid s\in S\mbox{, }\delta\!\left(s,w\right)\mbox{ defined}\right\} 
\]
for each $S\subseteq Q$ and $w\in X^{\star}$. A PFA is \emph{deterministic}
(\emph{DFA}) if $\delta$ is a total function. A PFA $\left(Q,X,\delta\right)$
is said to be \emph{strongly connected}%
\footnote{Some authors use the term \emph{transitive} \emph{automaton}.%
}\emph{ }if
\[
\left(\forall r,s\in Q\right)\left(\exists w\in X^{\star}\right)\delta\!\left(r,w\right)=s.
\]
A state $s\in Q$ is a \emph{sink state }if $\delta\!\left(s,x\right)=s$
for each $x\in X$. Clearly, if a nontrivial PFA has a sink state,
it is impossible for the PFA to be strongly connected. The class of
all strongly connected PFA and the class of all PFA with $k$-letter
alphabets are denoted by $\mathcal{SC}$ and $\mathcal{AL}_{k}$ respectively.
Automata from $\mathcal{AL}_{2}$ are called \emph{binary}.

\subsection{Careful synchronization}

For a given PFA, we call $w\in X^{\star}$ a \emph{careful reset word}
if
\[
\left(\exists r\in Q\right)\left(\forall s\in Q\right)\delta\!\left(s,w\right)=r.
\]
If such a word exists, the automaton is \emph{carefully synchronizing}.
There are also notions that describe ,,less careful'' variants of
synchronization of PFA. E.g., both the definition of \emph{D2-directing}
(see Section \ref{sub:PFAvsNFA}) and the definition of a \emph{reset
word }from \citep{BER6} give conditions that may hold for words that
are not careful reset words. However, all the notions are identical
if we consider only DFA. In such cases we can just use terms \emph{reset
word} and \emph{synchronizing}, without the adjective \emph{careful}.
In DFA, each word having a reset word as a factor is also a reset
word. Note that in a carefully synchronizing PFA, there is always
at least one $x\in X$ such that $\delta\!\left(s,x\right)$ is defined
on each $s\in Q$.

We use the following notation consistent with \citep{MAR5}. For a
PFA $A$, let $\car\!\left(A\right)$ denote the length of a shortest
careful reset word of $A$. If there is no such word, we put $\car\!\left(A\right)=0$.
For each $n\ge1$, let $\car\!\left(n\right)$ denote the maximum
value of $\car\!\left(A\right)$ taken over all $n$-state PFA $A$.
It is easy to see that $\car\!\left(n\right)\le2^{n}-n-1$ for each
$n$, but this upper bound has been pushed down to $\car\!\left(n\right)=O\left(n^{2}\cdot4^{\frac{n}{3}}\right)$
by Gazdag, Iv\'an, and Nagy-Gy\"{o}rgy \citep{GAZ1} in 2009.

\subsection{Subset synchronization}

Even if a PFA is not synchronizing, there could be various subsets
$S\subseteq Q$ such that 
\[
\left(\exists r\in Q\right)\left(\forall s\in S\right)\delta\!\left(s,w\right)=r
\]
for some word $w\in X^{\star}$. We say that such $S$ is \emph{carefully
synchronizable} in\emph{ }$A$ and in the opposite case we say it
is \emph{blind }in \emph{$A$}. The word $w$ is called a \emph{careful
reset word of $S$ }in \emph{$A$}. If $A$ is a DFA, we call $w$
just a \emph{reset word} of a \emph{synchronizable} subset $S$ in
$A$. Such words concerning DFA are of our main interest. They lack
some of the elegant properties of classical reset words of DFA (i.e.,
reset words of $S=Q$), particularly a word $w$ having a factor $v$
which is a reset word of $S$ need not to be itself a reset word of
$S$. In fact, if we choose a subset $S$ and a word $w$, it is possible
for the set $\delta\!\left(S,w\right)$ to be blind even if the set
$S$ is synchronizable.

For a PFA $A$, and $S\subseteq Q$ let $\csub\!\left(A\right)$ denote
the minimum length of a careful reset word of $S$ in $A$. If $S$
is blind, we set $\csub\!\left(A,S\right)=0$. If $A$ is a DFA we
write $\sub\!\left(A\right)$ instead of $\csub\!\left(A\right)$.
For each $n\ge1$, let $\csub\!\left(n\right)$ (and $\sub\!\left(n\right)$)
denote the maximum value of $\csub\!\left(A\right)$ taken over all
$n$-state PFA $A$ (or $n$-state DFA $A$ respectively) and all
their subsets of states. It is easy to see that 
\[
\sub\!\left(n\right)\le\csub\!\left(n\right)\le2^{n}-n-1
\]
for each $n$. Our strongly exponential lower bound applies to $\sub\!\left(n\right)$
and thus to $\csub\!\left(n\right)$ as well. The values $\csub\!\left(n\right)$
play only an auxiliary role in the present paper. 

If an automaton $A=\left(Q,X,\delta\right)$ and a subset $S\subseteq Q$
are given (possibly with $S=Q$), we say that $s\in Q$ is \emph{active
after }(or \emph{during}) \emph{the application of $u\in X^{\star}$
}if $s\in\delta\!\left(S,u\right)$ (or $s\in\delta\!\left(S,v\right)$
for a prefix $v$ of $u$, respectively).

\section{Previously Known Lower Bounds\label{sec:Previously-Known-Lower}}

Let $\mathcal{M}$ be a class of automata. For each $n$ let $\mathcal{M}_{\leq n}$
be the class of all automata lying in $\mathcal{M}$ and having at
most $n$ states. For each $n\ge1$ we extend our notation of $\car{}^{\mathcal{M}}\!\left(n\right)$,
$\sub^{\mathcal{M}}\!\left(n\right)$, and $\csub^{\mathcal{M}}\!\left(n\right)$,
denoting the maximum values of $\car\!\left(A\right)$, $\sub\!\left(A\right)$
and $\csub\!\left(A\right)$ taken over $A\in\mathcal{M}_{\le n}$.
In the cases of $\sub^{\mathcal{M}}\!\left(n\right)$ and $\csub^{\mathcal{M}}\!\left(n\right)$,
we use the notion in the obvious way even if $\mathcal{M}$ is a class
of pairs automaton-subset. All such notions we informally call \emph{synchronization
thresholds.} 

In 1976 Burkhard \citep{BUR1} showed that for any $n\ge2$ and $k\le n-2$
it is not hard to produce an $n$-state, $\binom{n-2}{k-1}$-letter
DFA with a $k$-state subset $S$ such that $\sub\!\left(A,S\right)\ge{n-2 \choose k-1}$.
If we set $k=\frac{n}{2}$ and use Stirling's approximation to check
that
\[
\binom{n}{\frac{n}{2}}\approx\sqrt{\frac{2}{\pi}}\cdot\frac{2^{n}}{\sqrt{n}},
\]
we get
\[
\sub\!\left(n\right)=\Omega\!\left(\frac{2^{n}}{\sqrt{n}}\right).
\]
The threshold $\car\!\left(n\right)$ was initially studied in 1982
by Goral\v{c}\'{i}k et al. \citep{GK1}, together with several related
problems. The authors show that for infinitely many $n$ there is
a permutation of $n$ states having order at least $\left(\sqrt[3]{n}\right)!$
and they use it to prove that $\car\!\left(n\right)\geq\left(\sqrt[3]{n}\right)!$.
The construction can be easily (e.g., using our Lemma \ref{lem: car->sub})
modified to establish $\sub\!\left(n\right)\geq\left(\sqrt[3]{n}\right)!$
as well, as it was later re-discovered in the paper \citep{LY1}.
Though exceeded by $\Omega\!\left(\frac{2^{n}}{\sqrt{n}}\right)$,
the later lower bound of $\sub\!\left(n\right)$ remains interesting
since the proof uses binary alphabets only.

In \citep{ITO1short}, Ito and Shikishima-Tsuji proved that $\car\!\left(n\right)\ge2^{\frac{n}{2}}$
and the construction was subsequently improved by Martyugin \citep{MAR6}
in order to reach $\car\!\left(n\right)\geq3^{\frac{n}{3}}$. Again,
the construction can be applied to subsets, so we get $\sub\!\left(n\right)\geq3^{\frac{n}{3}}$.
However, the last proofs seem to use very artificial examples of automata:
\begin{itemlist}
\item In the series, the alphabet size grows linearly with the growing number
of states - the proofs rely on the convention of measuring the \emph{size
}of an automaton only by the number of states. The results say nothing
about the thresholds $\sub{}^{\mathcal{AL}_{k}}\!\left(n\right)$
or $\car{}^{\mathcal{AL}_{k}}\!\left(n\right)$ for any $k\ge2$.
In 2013, Martyugin \citep{MAR5} proves that
\[
\car^{\mathcal{AL}_{2}}\!(n)>3^{\frac{n}{6\cdot\log_{2}n}}
\]
and 
\[
\car^{\mathcal{AL}_{k}}\!(n)>3^{\frac{n}{3\cdot\log_{m\text{\textminus}1}n}}
\]
for each $k\ge3$, which applies in a similar form also to subset
synchronization. However, it remained unclear whether $\car^{\mathcal{AL}_{k}}\!\left(n\right)=2^{\Omega\left(n\right)}$
or $\sub^{\mathcal{AL}_{k}}\!\left(n\right)=2^{\Omega\left(n\right)}$
for some $k\ge2$. Here we confirm this for $k=2$, so for any greater
$k$ the claim follows easily.
\item In the case of subset synchronization, the DFA have sink states, typically
two of them in each automaton. Use of sink states is a very strong
tool for designing automata having given properties, but in practice
such automata seem very special. They represent unstable systems balancing
between different deadlocks. The very opposite are strongly connected\emph{
}automata. Does the threshold remain so high if we consider only strongly
connected DFA? Unfortunately, we show below that it does, even if
we restrict the alphabet size to a constant. We introduce \emph{swap
congruences} as an alternative to sink states. \\
Note that in the case of careful synchronization, any lower bound
of $\car\!\left(n\right)$ applies easily to $\car^{\mathcal{SC}}\!\left(n\right)$
using a simple trick from Lemma \ref{lem: car sc}. Moreover, for
suitable series the alphabet size is increased only by a constant.
\end{itemlist}
In short, in the present paper we prove that
\begin{eqnarray*}
\sub^{\mathcal{AL}_{2}\cap\mathcal{SC}}\!\left(n\right) & = & 2^{\Omega\left(n\right)},\\
\car{}^{\mathcal{AL}_{2}\cap\mathcal{SC}}\!\left(n\right) & = & 2^{\Omega\left(n\right)}.
\end{eqnarray*}
The new bounds are tight in the sense of $\car\!\left(n\right)=2^{\theta\left(n\right)}$
and $\sub\!\left(n\right)=2^{\mathcal{\theta}\left(n\right)}$.

\section{Reductions between Thresholds\label{sec:Reductions-between-Thresholds}}

This section prepares the ground for the results presented in Section
\ref{sec:The-New-Lower Bounds} by introducing basic principles and
relationships concerning the studied thresholds. The principles are
not innovative, except for the method using \emph{swap congruences
}described in the paragraph \ref{sub:Strong-connectivity}, dealing
with strong connectivity in subset synchronization.

As noted above, many of the lower bounds of $\car\!\left(n\right)$
and $\sub\!\left(n\right)$ found in the literature were formulated
for only one of the notions but used ideas applicable to the other
as well. The key method used in the present paper is of this kind
again. However, we are not able to calculate any of the thresholds
from the other exactly, so we at least show several related inequalities
and then use some of them in Section \ref{sec:The-New-Lower Bounds}
to prove the main results. We use the term \emph{reduction} since
we prove the inequalities by transforming an instance of a problem
to an instance of another problem.

\subsection{Determinization by adding sink states }

The following inequality is not a key tool of the present paper; we
prove it in order to illustrate that even careful subset synchronization
is not much harder than subset synchronization itself. Recall that
trivially $\car\!\left(n\right)\le\csub\!\left(n\right)$ and $\sub\!\left(n\right)\le\csub\!\left(n\right)$
for each $n$.
\begin{lemma}
\label{lem: car->sub}For each $n\ge1$ it holds that
\[
\csub\!\left(n\right)\le\sub\!\left(n+2\right)-1.
\]
\end{lemma}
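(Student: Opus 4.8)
The plan is to realize the inequality as a reduction that turns careful subset synchronization in a partial automaton into ordinary subset synchronization in a deterministic one, paying exactly two states and a single extra letter (the letter accounting for the ``$-1$''). Fix an extremal pair $\left(A,S\right)$ with $A=\left(Q,X,\delta\right)$, $\left|Q\right|=n$, realizing $\csub\!\left(n\right)$. If $S$ is blind the claim is trivial, so assume $\csub\!\left(A,S\right)=L\ge1$, witnessed by a shortest careful reset word $w$ with $\delta\!\left(S,w\right)=\left\{r\right\}$ for some $r\in Q$. I would determinize $A$ not by a single sink but by a two-state \emph{dead pair} $\left\{f,g\right\}$ on which every old letter acts as the transposition $f\leftrightarrow g$: set $A'=\left(Q\cup\left\{f,g\right\},X\cup\left\{a\right\},\delta'\right)$, where for $x\in X$ and $s\in Q$ we put $\delta'\!\left(s,x\right)=\delta\!\left(s,x\right)$ when defined and $\delta'\!\left(s,x\right)=f$ otherwise, while $\delta'\!\left(f,x\right)=g$ and $\delta'\!\left(g,x\right)=f$. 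The fresh letter $a$ acts as the identity on $Q$ and injects the dead pair into the live part, $\delta'\!\left(f,a\right)=\delta'\!\left(g,a\right)=s_{0}$ for a fixed $s_{0}\in S$. Finally take the subset $S'=S\cup\left\{f\right\}$.

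The point of the transposition, rather than a sink, is that tokens which fall into the dead pair oscillate and can never collapse to a single element when they enter out of phase; this blocks the ``cheap'' synchronization to a dead state that a single sink would permit, and it is the reason two states (not one) are spent. For the easy direction I would check that $aw$ is a reset word of $S'$ of length $L+1$: since $a$ is the identity on $Q$ and sends $f$ to $s_{0}\in S$, we get $\delta'\!\left(S',a\right)=S$, and then the careful word $w$ drives $S$ to $\left\{r\right\}$ without ever touching $\left\{f,g\right\}$, so $\delta'\!\left(S',aw\right)=\left\{r\right\}$. Hence $\sub\!\left(A',S'\right)\le L+1$ and $S'$ is synchronizable.

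It remains to prove the matching lower bound $\sub\!\left(A',S'\right)\ge L+1$, and this is where the real work lies. The guiding observations are that the companion state $f$ starts inside the dead pair and can re-enter $Q$ only through the letter $a$, so any reset word must contain at least one $a$; and that, because $a$ is the identity on live states while a token that ever visits $\left\{f,g\right\}$ can rejoin the live part only at $s_{0}$, deleting the occurrences of $a$ from a reset word must leave an old-letter word that carefully resets the $S$-tokens, forcing at least $L$ old letters in addition to the $a$. The main obstacle is making this bookkeeping airtight for plays that deliberately push tokens into the dead pair: one must track the parity (phase) of the transposition to rule out a spurious short synchronization in which live and dead tokens coincide, and verify that such defect-plays, together with re-injection via $a$, never shorten the careful work below $L$. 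I expect the clean statement to be a subadditivity argument showing that for any prefix $v$ the remaining careful cost of $\delta'\!\left(S',v\right)$ plus $\left|v\right|$ stays at least $L+1$; once that is established, $\csub\!\left(n\right)=L\le\sub\!\left(A',S'\right)-1\le\sub\!\left(n+2\right)-1$ follows immediately.
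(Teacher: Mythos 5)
Your reduction fails in the lower-bound direction, and the failure is structural rather than a matter of unfinished bookkeeping. The transposition on the dead pair $\left\{ f,g\right\} $ only prevents tokens from merging when they enter the pair \emph{out of phase}; tokens that enter at times of equal parity coincide from then on. Since you route every undefined transition of $A$ into $f$, undefined transitions are not penalized at all --- they become a synchronization tool. Concretely: take any hard pair $\left(A_{0},S_{0}\right)$ with $\csub\left(A_{0},S_{0}\right)=L$ and add a letter $z$ with $\delta\left(s,z\right)$ undefined for every state $s$. This changes neither the number of states nor $\csub$ (no careful reset word can ever use $z$), so there \emph{exists} an extremal pair realizing $\csub\left(n\right)$ that contains such a letter, and your argument uses no property of the extremal pair that would exclude it. Now in your $A'$, let $x_{1}$ be the first letter of a careful reset word of $S$, so $x_{1}$ is defined on all of $S$. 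After applying $x_{1}$, all live tokens are in $Q$ and the $f$-token sits at $g$; applying $z$ then sends every live token to $f$ and moves $g$ to $f$, so $\delta'\left(S',x_{1}z\right)=\left\{ f\right\} $. Hence $\sub\left(A',S'\right)\le2$ while you claim $\sub\left(A',S'\right)\ge L+1$. This also refutes your claim that every reset word of $S'$ must contain $a$: instead of the $f$-token rejoining the live part, the live tokens can join it inside the dead pair. No amount of parity bookkeeping repairs this, because in-phase collapse inside the pair is genuinely possible.

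The paper's proof uses the two extra states in exactly the opposite way, and that is the idea you are missing. There the subset receives a \emph{sink} state $\mathrm{D}\in S_{B}$, so any reset word must synchronize everything into $\mathrm{D}$; undefined transitions, and the new letter $\omega$ applied to any live state other than $r_{0}$, lead to a second sink $\overline{\mathrm{D}}\notin S_{B}$. Two distinct sinks can never merge, so $\overline{\mathrm{D}}$ must never become active: undefined transitions are thereby made \emph{fatal} instead of phase-shifting, and $\omega$ becomes a genuine verifier, applicable only after all live tokens have been carefully driven to $r_{0}$, i.e.\ only after a careful reset word of $S_{A}$ has been completed. Your letter $a$ (identity on $Q$, dead pair $\mapsto s_{0}$) verifies nothing: it may be applied harmlessly at any moment, so even setting aside the collapse above, deleting its occurrences from a reset word of $S'$ need not leave a careful word over $X$ --- tokens may fall into the dead pair and be re-injected at $s_{0}$, which is precisely the kind of play the construction must forbid.
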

\begin{proof}
Take any PFA $A=\left(Q_{A},X_{A},\delta_{A}\right)$ with a carefully
synchronizable subset $S_{A}\subseteq Q_{A}$ and choose a shortest
careful reset word $w\in X^{\star}$ of $S_{A}$ with $\delta_{A}\!\left(s,w\right)=r_{0}$
for each $s\in S_{A}$. We construct a DFA $B=\left(Q_{B},X_{B},\delta_{B}\right)$
and a synchronizable subset $S_{B}\subseteq Q_{B}$ such that $\sub\!\left(B,S_{B}\right)\ge\left|w\right|+1$.
Let us set
\[
\begin{array}{cc}
\begin{aligned}Q_{B} & =Q_{A}\cup\left\{ \mathrm{D},\overline{\mathrm{D}}\right\} ,\\
X_{B} & =X_{A}\cup\left\{ \omega\right\} ,
\end{aligned}
\hspace{20bp} & \begin{aligned}\delta_{B}\!\left(\mathrm{D},x\right) & =\mathrm{D},\\
\delta_{B}\!\left(\mathrm{\overline{D}},x\right) & =\mathrm{\overline{D}}
\end{aligned}
\end{array}
\]
for each $x\in X_{B}$, and
\[
\begin{array}{cc}
\delta_{B}\!\left(s,x\right)=\begin{cases}
\delta_{A}\!\left(s,x\right) & \mbox{if defined},\\
\overline{\mathrm{D}} & \mbox{otherwise},
\end{cases}\hspace{20bp} & \delta_{B}\!\left(s,\omega\right)=\begin{cases}
\mathrm{D} & \mbox{if }s=r_{0},\\
\overline{\mathrm{D}} & \mbox{otherwise}
\end{cases}\end{array}
\]
for each $s\in Q_{A},x\in X_{A}$. Denote $S_{B}=S_{A}\cup\left\{ \mathrm{D}\right\} .$
The word $w\omega$ witnesses that the subset $S_{B}$ is synchronizable.
On the other hand, let $v$ be any reset word of $S_{B}$. Since $\mathrm{D}$
is a sink state and $\mathrm{D}\in S_{B}$, we have $\delta_{B}\!\left(v,s\right)=\mathrm{D}$
for each $s\in S_{B}$. Thus:
\begin{itemlist}
\item The state $\overline{\mathrm{D}}$ is not active during the application
of $v$.
\item There need to be an occurrence of $\omega$ in $v$. 
\end{itemlist}
Denote $v=v_{0}\omega v_{1}$, where $v_{0}\in X_{A}^{\star}$ and
$v_{1}\in X_{B}^{\star}$. If $\left|\delta_{B}\!\left(S_{B},v_{0}\right)\cap Q_{A}\right|=1$,
we are done since $v_{0}$ maps all the states of $S_{A}$ to a unique
state using only the transitions defined in $A$, so $\left|v\right|\ge\left|w\right|+1$.
Otherwise, there is some $s\in\delta_{B}\!\left(Q_{B},v_{0}\right)\cap Q_{A}$
such that $s\neq r_{0}$, but then $\delta_{B}\!\left(\omega,s\right)=\overline{\mathrm{D}}$,
which is a contradiction.
\end{proof}

\subsection{Strong connectivity\label{sub:Strong-connectivity}}

First, we show an easy reduction concerning careful synchronization
of strongly connected PFA. We use a simple trick: A letter that is
defined only on a single state cannot appear in a shortest careful
reset word, so one can make a PFA strongly connected by adding such
letters. The number of new letters needed may be reduced by adding
special states, but the simple variant described by Lemma \ref{lem: car sc}
is illustrative and strong enough for our purpose.

For each $j\ge0$ we define the class $\mathcal{C}_{j}$ of PFA as
follows. A PFA $A=\left(Q,X,\delta\right)$ belongs to $\mathcal{C}_{j}$
if there are $j$ pairs $\left(r_{1},q_{1}\right),\dots,\left(r_{j},q_{j}\right)\in Q\times Q$
such that adding transitions of the form $r_{i}\longrightarrow q_{i}$
for each $i=1,\dots,j$ makes the automaton strongly connected. Note
that $\mathcal{C}_{0}=\mathcal{SC}$.
\begin{lemma}
\label{lem: car sc}For each $n,k,j\ge1$ it holds that
\[
\car^{\mathcal{AL}_{k}\cap\mathcal{C}_{j}}\!\left(n\right)\le\car^{\mathcal{AL}_{k+j}\cap\mathcal{SC}}\!\left(n\right).
\]
\end{lemma}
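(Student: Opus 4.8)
The plan is to prove the inequality by taking an automaton that realizes the left-hand maximum and turning it into a strongly connected one with $j$ extra letters but the same number of states, without shortening its shortest careful reset word. Fix $A=(Q,X,\delta)$ with exactly $k$ letters and $|Q|\le n$ that attains $\car^{\mathcal{AL}_{k}\cap\mathcal{C}_{j}}(n)$; by definition of $\mathcal{C}_{j}$ there are pairs $(r_{1},q_{1}),\dots,(r_{j},q_{j})\in Q\times Q$ whose addition as transitions makes $A$ strongly connected. If $\car(A)=0$ the claim is immediate, since the right-hand side is a maximum of nonnegative quantities over a nonempty class. So assume $A$ is carefully synchronizing, and build $A'=(Q,X\cup\{y_{1},\dots,y_{j}\},\delta')$ by keeping every transition of $A$ and letting each new letter $y_{i}$ be defined on the single state $r_{i}$ only, with $\delta'(r_{i},y_{i})=q_{i}$.

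I would first dispatch the two membership statements, which are routine. The automaton $A'$ has exactly $k+j$ letters and the same $|Q|\le n$ states, so $A'\in\mathcal{AL}_{k+j}$; and its underlying digraph is that of $A$ together with precisely the edges $r_{i}\to q_{i}$, which is strongly connected by the choice of the pairs, so $A'\in\mathcal{SC}$. (The degenerate case $r_{i}=q_{i}$ only adds a self-loop and causes no trouble.) Moreover $A'$ is carefully synchronizing, because any careful reset word of $A$ lies in $X^{\star}$ and is therefore still a careful reset word of $A'$; in particular a shortest careful reset word $w'$ of $A'$ exists.

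The heart of the argument, and the only step I expect to need care, is to show that $w'$ contains none of the new letters. Suppose otherwise and write $w'=u\,y_{i}\,v$, where $u\in X^{\star}$ is the prefix preceding the first occurrence of a new letter. Because $w'$ is a careful reset word, $\delta'(s,w')$ is defined for every $s\in Q$; since $y_{i}$ is defined only at $r_{i}$, this forces $\delta'(s,u)=r_{i}$ for all $s\in Q$. But then $u$ already maps every state to $r_{i}$, i.e.\ $u$ is itself a careful reset word of $A'$ with $|u|<|w'|$, contradicting minimality. Hence $w'\in X^{\star}$, and since $\delta'$ agrees with $\delta$ on the old letters, $w'$ is a careful reset word of $A$; consequently $\car(A)\le|w'|=\car(A')$.

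Putting the pieces together yields $\car^{\mathcal{AL}_{k}\cap\mathcal{C}_{j}}(n)=\car(A)\le\car(A')\le\car^{\mathcal{AL}_{k+j}\cap\mathcal{SC}}(n)$, which is the desired inequality. The construction is exactly the ``letter defined on a single state'' trick announced before the statement; the subtlety worth emphasizing in a full write-up is that the definition of a careful reset word demands the transition be defined all the way along the word, and this is precisely what forbids a minimal careful reset word from ever using one of the single-state letters.
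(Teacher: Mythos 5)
Your proof is correct and takes essentially the same approach as the paper: the paper's own proof builds exactly the same automaton (new letters $\psi_{1},\dots,\psi_{j}$, each defined only on $r_{i}$ with $\delta\!\left(r_{i},\psi_{i}\right)=q_{i}$) and asserts that it is strongly connected with $\car\!\left(B\right)=\car\!\left(A\right)$. Your write-up merely spells out the details the paper leaves as ``easy to check,'' namely the minimality argument showing that a shortest careful reset word cannot contain a letter defined on a single state.
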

\begin{proof}
Take any PFA $A=\left(Q,X_{A},\delta_{A}\right)\in\mathcal{AL}_{k}\cap\mathcal{C}_{j}$
together with the pairs $\left(r_{1},q_{1}\right),\dots,\left(r_{j},q_{j}\right)\in Q\times Q$
from the definition of $\mathcal{C}_{j}$. We construct a PFA $B=\left(Q,X_{B},\delta_{B}\right)$
where $X_{B}=X_{A}\cup\left\{ \psi_{1},\dots,\psi_{j}\right\} $,
$\delta_{B}\!\left(s,x\right)=\delta_{A}\!\left(s,x\right)$ for $x\in X_{A}$
and $s\in Q$, and
\[
\delta_{B}\!\left(s,\psi_{i}\right)=\begin{cases}
q_{i} & \mbox{if }s=r_{i},\\
\und & \mbox{otherwise}
\end{cases}
\]
for $i=1\dots,,j'$ and $s\in Q$. Now it is easy to check that $B$
is strongly connected and that $\car\!\left(B\right)=\car\!\left(A\right)$.
\end{proof}
Second, we present an original method concerning subset synchronization
of strongly connected DFA. All the lower bounds applicable to $\sub\!\left(n\right)$
that we have found in the literature used two sink states (deadlocks)
to force application of particular letters during a synchronization
process. A common step in such proof looks like \emph{,,The letter
$x$ cannot be applied since that would make the sink state $\overline{\mathrm{D}}$
active, while another sink state $\mathrm{D}$ is active all the time''}.
In order to prove a lower bound of $\sub^{\mathcal{SC}}\!\left(n\right)$,
we have to develop an alternative mechanism. Our mechanism relies
on \emph{swap congruences}:

Recall that, given a DFA $A=\left(Q,X,\delta\right)$, an equivalence
relation $\rho\subseteq Q\times Q$ is a \emph{congruence }if 
\[
r\rho s\Rightarrow\delta\!\left(r,x\right)\rho\,\delta\!\left(s,x\right)
\]
for each $x\in X$. We say that a congruence $\rho$ is a \emph{swap
congruence} of a DFA if, for each equivalence class $C$ of $\rho$
and each letter $x\in X$, the restricted function $\delta:C\times\left\{ x\right\} \rightarrow Q$
is injective. The key property of swap congruences is the following.
\begin{lemma}
Let $A=\left(Q,X,\delta\right)$ be a DFA, let $\rho\subseteq Q\times Q$
be a swap congruence and take any $S\subseteq Q$. If there are any
$r,s\in S$ with $r\neq s$ and $r\rho s$, the set $S$ is blind.\end{lemma}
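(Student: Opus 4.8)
The plan is to exploit the defining property of a swap congruence directly: if two distinct $\rho$-equivalent states $r,s\in S$ are ever merged by a word, then at the step where they first merge some letter $x$ must collapse two distinct states of a single class $C$, contradicting the injectivity of $\delta$ restricted to $C\times\{x\}$. So the argument is essentially an invariant: distinct states in the same $\rho$-class stay distinct forever.

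First I would show that $\rho$ is preserved by the transition function on single letters, which is just the congruence property $r\rho s\Rightarrow\delta(r,x)\,\rho\,\delta(s,x)$, and extend it to all words $w\in X^{\star}$ by an easy induction on $|w|$, so that $r\rho s$ implies $\delta(r,w)\,\rho\,\delta(s,w)$ whenever both are defined. Next I would prove the crucial claim that injectivity is preserved along words: if $r\rho s$ and $r\neq s$, then $\delta(r,w)\neq\delta(s,w)$ for every $w\in X^{\star}$ (on which both are defined). This follows by induction: writing $w=xv$, the single letter $x$ keeps $\delta(r,x)$ and $\delta(s,x)$ distinct because $\delta$ restricted to the class of $r$ (which contains $s$) and the letter $x$ is injective; then $\delta(r,x)\,\rho\,\delta(s,x)$ by the congruence step, so the induction hypothesis applies to the shorter word $v$.

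Combining these two facts finishes the proof. Given $r,s\in S$ with $r\neq s$ and $r\rho s$, any candidate careful reset word $w$ of $S$ would have to satisfy $\delta(r,w)=\delta(s,w)$, but the injectivity-along-words claim forces $\delta(r,w)\neq\delta(s,w)$ for every $w$. Hence no word synchronizes $S$, i.e., $S$ is blind. Note that since $A$ is a DFA all transitions are defined, so there is no issue about $\delta(r,w)$ or $\delta(s,w)$ being undefined; the induction goes through unconditionally.

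I expect the only delicate point to be stating the injectivity-preservation induction cleanly, in particular making sure that at each step the two states being tracked genuinely lie in a \emph{common} $\rho$-class so that the swap-injectivity of $\delta$ on $C\times\{x\}$ is applicable — this is exactly what the congruence property guarantees, so the two lemmas must be applied in the right order (congruence to stay within a class, swap-injectivity to stay distinct). There are no real calculations here; the whole content is the interplay between these two properties, and once that is set up the conclusion is immediate.
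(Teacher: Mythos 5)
Your proof is correct and follows essentially the same route as the paper: the paper's proof is a one-line appeal to the fact that, since $r$ and $s$ share a $\rho$-class, the congruence property and per-class injectivity together force $\delta\!\left(r,w\right)\neq\delta\!\left(s,w\right)$ for every $w\in X^{\star}$. You merely make explicit the induction on $\left|w\right|$ (congruence to stay in a common class, swap-injectivity to stay distinct) that the paper leaves implicit.
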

\begin{proof}
Because $r$ and $s$ lie in a common equivalence class of $\rho$,
by the definition of a swap congruence we have $\delta\!\left(r,w\right)\neq\delta\!\left(s,w\right)$
for any $w\in X^{\star}$.
\end{proof}
Thus, the alternative mechanism relies on arguments of the form \emph{,,The
letter $x$ cannot be applied since that would make both the states
$r,s$ active, while it holds that $r\rho s$''}. It turns out that
our results based on the method can be derived from more transparent
but not strongly connected constructions by the following reduction
principle: 
\begin{lemma}
\label{lem: sub sc}For each $n\ge1$ it holds that 
\[
\sub\!\left(n\right)\le\sub^{\mathcal{SC}}\!\left(2n+2\right)-1.
\]
Moreover, for each $n,k\ge1$ and $j\ge2$ it holds that
\[
\sub^{\mathcal{AL}_{k}\cap\mathcal{C}_{j}}\!\left(n\right)\le\sub^{\mathcal{AL}_{k+j}\cap\mathcal{SC}}\!\left(2n+2\right)-1.
\]
\end{lemma}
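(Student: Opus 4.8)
The plan is to build, from an arbitrary DFA $A=(Q_A,X_A,\delta_A)$ with a synchronizable subset $S_A$ of size $n$, a strongly connected DFA $B$ on roughly $2n+2$ states whose shortest reset word for a suitable subset $S_B$ is essentially as long as that of $S_A$. The central design idea, replacing the two sink states used in the classical constructions, is to introduce a swap congruence. Concretely, I would take two disjoint copies of $Q_A$, say $Q_A\times\{0,1\}$, and have every original letter $x\in X_A$ act as $\delta_B((s,c),x)=(\delta_A(s,x),c)$, i.e.\ it respects the copy-coordinate. I then add two auxiliary states, a $\mathrm{D}$-role and a $\overline{\mathrm D}$-role, again doubled, to mimic the deadlock behaviour of the non-strongly-connected construction while keeping everything strongly connected via a small number of extra letters.

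Next I would define the congruence $\rho$ that pairs $(s,0)$ with $(s,1)$ for every $s$ (together with pairing the two copies of each auxiliary state). To make $\rho$ a \emph{swap} congruence I must ensure that on each two-element class $\{(s,0),(s,1)\}$ every letter acts injectively, i.e.\ never collapses the two copies. The copy-preserving letters do this automatically; the extra letters $\psi_i$ that I add to enforce strong connectivity (in the spirit of Lemma~\ref{lem: car sc}, which is why the hypothesis $j\ge2$ and the shift from $\mathcal{C}_j$ to $\mathcal{AL}_{k+j}$ appear) must be defined so that they swap or preserve copies but never identify them. The subset $S_B$ is taken to be $S_A\times\{0\}$ together with the distinguished $\mathrm D$-copy, analogous to the $S_B=S_A\cup\{\mathrm D\}$ of Lemma~\ref{lem: car->sub}.

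The correctness argument then has two halves. For the upper direction, a reset word $w$ for $S_A$ lifts to a reset word for $S_B$ of length $|w|+1$ (one extra letter to finish the synchronization through the $\mathrm D$-role), giving $\sub(A,S_A)+1\le\sub^{\mathcal{SC}}(B,S_B)$. For the lower direction, I would argue that any reset word $v$ for $S_B$ cannot ever make two $\rho$-equivalent states simultaneously active, because the previous lemma shows such a set is blind; this is exactly where swap congruences substitute for the ``activating $\overline{\mathrm D}$ is forbidden'' argument. Tracking which states are active forces $v$ to project down to a genuine reset word of $S_A$ in $A$, so $\sub^{\mathcal{SC}}(B,S_B)\le\sub(A,S_A)+1$ cannot be beaten from below, yielding the claimed inequality after accounting for the $+1$ and the $2n+2$ state count.

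The hard part will be verifying the lower-direction bookkeeping: one must show that the auxiliary states and the strong-connectivity letters cannot be exploited to synchronize $S_B$ faster than by honestly simulating $A$, and in particular that no short-cut arises from the doubling. I expect the delicate point to be checking that the swap congruency is preserved by \emph{every} letter, including the connectivity letters, since a single letter that collapses a class would destroy the blindness guarantee and break the whole reduction; getting the auxiliary transitions to be simultaneously copy-injective, deadlock-faithful, and strongly connecting is the real constraint forcing the $2n+2$ blow-up and the $j\ge2$ assumption. The first (non-connectivity) inequality $\sub(n)\le\sub^{\mathcal{SC}}(2n+2)-1$ is then just the special case where $A$ already lies in $\mathcal{C}_j$ for a small constant $j$ absorbed into the construction.
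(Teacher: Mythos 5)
Your architecture coincides with the paper's: double $Q_A$ into two copies paired by a swap congruence $\rho$, add one doubled auxiliary pair (the paper's $\mathrm{E},\overline{\mathrm{E}}$), let the letters of $X_A$ act copy-preservingly, set $S_B=S_A\cup\{\mathrm{E}\}$, and replace the two-sink-state argument by blindness of any set containing a $\rho$-related pair. The genuine gap is that you never define the $j$ added letters, and that definition \emph{is} the proof. Each $\psi_i$ must do three jobs at once: supply the edges $r_i\longrightarrow q_i$ and $\overline{r_i}\longrightarrow\overline{q_i}$ that make the two copies strongly connected, route into and out of the pair $\{\mathrm{E},\overline{\mathrm{E}}\}$, and allow $\mathrm{E}$ to merge into the main copy only after $S_A$ has already been collapsed. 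The paper resolves this by distinguishing one index $I$: the letter $\psi_I$ sends both $r_I$ and $\mathrm{E}$ to $q_I$ and every other unbarred $s$ to $\overline{q_I}$, while each $\psi_i$ with $i\neq I$ sends $r_i$ to $q_i$, every other unbarred $s$ to $\overline{\mathrm{E}}$, and fixes $\mathrm{E}$ (the barred transitions are then forced by injectivity on $\rho$-classes). A premature $\psi_I$ activates the pair $q_I,\overline{q_I}$ and a premature $\psi_i$ activates $\mathrm{E},\overline{\mathrm{E}}$; that is exactly where blindness is invoked. Moreover, the only incoming edges of $\mathrm{E}$ come from letters $\psi_i$ with $i\neq I$, which is precisely why the hypothesis $j\ge2$ appears; you correctly flag this as ``the real constraint'' but propose no mechanism that meets it.

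Second, your one concrete quantitative claim --- that a reset word $w$ of $S_A$ lifts to a reset word of $S_B$ of length $\left|w\right|+1$, ``one extra letter to finish'' --- is false for any construction of this kind. A letter respecting the swap congruence can merge $\mathrm{E}$ with only one designated state of the main copy ($r_I$ above), and the state $r_0=\delta_A\!\left(s,w\right)$ where $S_A$ lands need not be that state. The correct lift is $wu\psi_I$ with $u\in X_A^{\star}$ carrying $r_0$ to $r_I$, and one must additionally argue, from the $\mathcal{C}_j$ structure, that $I$ can be chosen so that $r_I$ is reachable from $r_0$ inside $A$. This costs nothing for the lemma, since all that is needed is that $S_B$ is synchronizable together with the lower bound $\left|v\right|\ge\sub\!\left(A,S_A\right)+1$ for every reset word $v$ of $S_B$ (note also that your two directions are labelled backwards: the lift bounds $\sub\!\left(B,S_B\right)$ from above, the blindness argument from below). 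But its failure shows that the merging mechanism you have in mind is not yet pinned down, and pinning it down is where essentially all the content of the lemma lies.
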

\begin{proof}
The first claim follows easily from the second one. So, take any DFA
$A=\left(Q_{A},X_{A},\delta_{A}\right)\in\mathcal{AL}_{k}\cap\mathcal{C}_{j}$
together with the pairs $\left(r_{1},q_{1}\right),\dots,\left(r_{j},q_{j}\right)\in Q_{A}\times Q_{A}$
from the definition of $\mathcal{C}_{j}$ and let $S\subseteq Q_{A}$
be synchronizable. We construct a strongly connected DFA $B=\left(Q_{B},X_{B},\delta_{B}\right)$
and a subset $S_{B}\subseteq Q_{B}$ such that $\sub\left(B,S_{B}\right)\ge\sub\left(A,S_{A}\right)+1$.
Let us set
\begin{eqnarray*}
Q_{B} & = & \left\{ s,\overline{s}\mid s\in Q_{A}\right\} \cup\left\{ \mathrm{E},\overline{\mathrm{E}}\right\} ,\\
X_{B} & = & X_{A}\cup\left\{ \psi_{1},\dots,\psi_{j}\right\} .
\end{eqnarray*}
We want the relation
\[
\rho=\left\langle \left(s,\overline{s}\right)\mid s\in Q_{A}\cup\left\{ \mathrm{E}\right\} \right\rangle ,
\]
where $\left\langle \dots\right\rangle $ denotes an equivalence closure,
to be a swap congruence. Regarding this requirement, it is enough
to define $\delta_{B}$ on $Q_{A}\cup\left\{ \mathrm{E}\right\} $.
The remaining transitions are forced by the injectivity on the equivalence
classes. We set
\[
\begin{array}{cc}
\delta_{B}\!\left(s,x\right)=\delta_{A}\!\left(s,x\right),\hspace{20bp} & \delta_{B}\!\left(\mathrm{E},x\right)=\mathrm{E}\end{array}
\]
for any $s\in Q_{A},x\in X_{A}$, while the letters $\psi_{1},\dots,\psi_{j}$
act as follows: 
\[
\begin{array}{cc}
\delta_{B}\!\left(s,\psi_{I}\right)=\begin{cases}
q_{I} & \mbox{if }s=r_{I},\\
\overline{q_{I}} & \mbox{otherwise},
\end{cases}\hspace{20bp} & \delta_{B}\!\left(\mathrm{E},\psi_{I}\right)=q_{I},\\
\delta_{B}\!\left(s,\psi_{i}\right)=\begin{cases}
q_{i} & \mbox{if }s=r_{i},\\
\overline{\mathrm{E}} & \mbox{otherwise},
\end{cases}\hspace{20bp} & \delta_{B}\!\left(\mathrm{E},\psi_{i}\right)=\mathrm{E}\hspace{4bp}
\end{array}
\]
for $s\in Q_{A}$ and $i\neq I$, where $I$ is chosen such that for
a reset word $w$ of $S_{A}$ in $A$ with $\delta_{A}\!\left(s,w\right)=r_{0}$,
the state $r_{I}$ is reachable from $r_{0}$. It is easy to see that
such $I$ exists for any $r_{0}\in S_{A}$. We set $S_{B}=S_{A}\cup\left\{ \mathrm{E}\right\} $.
\begin{itemlist}
\item First, note that the set $S_{B}$ is synchronizable in $B$ by the
word $wu\psi_{I}$ where $u\in X_{A}^{\star}$ such that $\delta_{A}\!\left(r_{0},u\right)=r_{I}$. 
\item On the other hand, let $v$ be a reset word of $S_{B}$ in $B$. The
word $v$ necessarily contains some $\psi_{i}$ for $i\in\left\{ 1,\dots,j\right\} $,
so we can write $v=v_{0}\psi_{i}v_{1}$, where $v_{0}\in X_{A}^{\star},v_{1}\in X_{B}^{\star}$.
If $v_{0}$ is a reset word of $S_{A}$ in $A$, $\left|v\right|\ge\sub\!\left(A,S_{A}\right)+1$
and we are done. Otherwise there is a state $s\neq r_{i}$ in $\delta_{B}\!\left(S,v_{0}\right)$
and we see that both $q_{i}$ and $\overline{q_{i}}$ (if $i=I$)
or both $\mathrm{E}$ and $\overline{\mathrm{E}}$ (if $i\neq I$)
lie in $\delta_{B}\!\left(S,v_{0}\psi_{1}\right)$, which is a contradiction
with properties of the swap congruence $\rho$.
\end{itemlist}
The automaton $B$ is strongly connected since the transitions $r_{i}\overset{\psi_{i}}{\longrightarrow}q_{i}$
and $\overline{r_{i}}\overset{\psi_{i}}{\longrightarrow}\overline{q_{i}}$
for each $i=1,\dots,j$ make both the copies of $A$ strongly connected
and there are transitions $\mathrm{E}\overset{\psi_{I}}{\longrightarrow}q_{I}$,
$s\overset{\psi_{i}}{\longrightarrow}\overline{\mathrm{E}}$ , $\overline{\mathrm{E}}\overset{\psi_{I}}{\longrightarrow}\overline{q_{2}}$,
and $\overline{s}\overset{\psi_{i}}{\longrightarrow}\mathrm{E}$ for
some $i\neq I$ and $s\neq r_{i}$.
\end{proof}

\subsection{A special case of subset synchronization}

We are not aware of any general bad-case reduction from subset synchronization
to careful synchronization. Here we suggest a special class (denoted
by $\mathcal{M}_{\mathrm{P}}$) of pairs automaton-subset such that
the instances from the class are in certain sense reducible to careful
synchronization. The main construction of the present paper (i.e.,
the proof of Lemma \ref{lem: de bruijn automaton}) yields instances
of subset synchronization that fit to this class. We use the following
definitions:
\begin{itemlist}
\item Given a PFA $A=\left(Q,X,\delta\right)$ and a carefully synchronizable
subset $S\subseteq Q$, the $S$\emph{-relevant part} of $A$ is 
\begin{eqnarray*}
Q_{A,S} & = & \bigcup_{w\in W_{S}}\delta\!\left(S,w\right),
\end{eqnarray*}
where $W_{S}$ is the set of prefixes of careful reset words of $S$
in $A$. The $S$-\emph{relevant automaton }of $A$ is $R_{A,S}=\left(Q_{A,S},X,\delta_{A,S}\right)$,
where
\[
\delta_{A,S}\!\left(s,x\right)=\begin{cases}
\delta\!\left(s,x\right) & \mbox{if }\delta\!\left(s,x\right)\in Q_{A,S},\\
\und & \mbox{otherwise}
\end{cases}
\]
for each $s\in Q_{A,S}$ and $x\in X$.
\item The class $\mathcal{M}_{\mathrm{P}}$ is defined as follows. For any
PFA $A=\left(Q,X,\delta\right)$ and any carefully synchronizable
$S\subseteq Q$, the pair $\left\langle A,S\right\rangle $ lies in
$\mathcal{M}_{\mathrm{P}}$ if there are subsets $P_{1},\dots,P_{\left|S\right|}\subseteq Q$
such that:

\begin{itemlist}
\item The sets $P_{1},\dots,P_{\left|S\right|}$ are disjoint and $\bigcup_{i=1}^{\left|S\right|}P_{i}=Q_{A,S}$. 
\item For each $v\in X^{\star}$ such that $\delta\!\left(s,u\right)\in Q_{A,S}$
for any prefix $u$ of $v$ and any $s\in S$, it holds that $v$
is a careful reset word of $S$, or 
\[
\left|\delta\!\left(S,v\right)\cap P_{i}\right|=1
\]
for each $i=1,\dots,\left|S\right|$. In particular, the choice of
empty $v$ implies that 
\[
\left|S\cap P_{i}\right|=1
\]
must hold for each $i=1,\dots,\left|S\right|$.
\end{itemlist}
\item The class $\mathcal{C}_{j}^{\mathrm{R}}$ for $j\ge0$ is defined
as follows. For any PFA $A=\left(Q,X,\delta\right)$ and any carefully
synchronizable $S\subseteq Q$, the pair $\left\langle A,S\right\rangle $
lies in $\mathcal{C}_{j}^{\mathrm{R}}$ if $R_{A,S}\in\mathcal{C}_{j}$.\end{itemlist}
\begin{lemma}
\label{lem:sub->car}For each $n\ge1$ it holds that
\[
\csub^{\mathcal{M}_{\mathrm{P}}}\!\left(n\right)\le\car\!\left(n\right).
\]
Moreover, for each $n,k,j\ge1$ it holds that 
\[
\csub^{\mathcal{AL}_{k}\cap\mathcal{C}_{j}^{\mathrm{R}}\cap\mathcal{M}_{\mathrm{P}}}\!\left(n\right)\le\car^{\mathcal{AL}_{k+1}\cap\mathcal{C}_{j}}\!\left(n\right).
\]
\end{lemma}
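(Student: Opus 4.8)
The plan is to reduce careful subset synchronization of a pair $\langle A,S\rangle\in\mathcal{M}_{\mathrm{P}}$ to careful synchronization of a single PFA $B$ built on the $S$-relevant automaton. Given $A=\left(Q,X,\delta\right)$ with carefully synchronizable $S$ and blocks $P_{1},\dots,P_{\left|S\right|}$ witnessing membership in $\mathcal{M}_{\mathrm{P}}$, I take $B$ to be $R_{A,S}=\left(Q_{A,S},X,\delta_{A,S}\right)$ augmented by one fresh letter $\zeta$. Writing $s_{i}$ for the unique element of $S\cap P_{i}$ (unique because the empty word forces $\left|S\cap P_{i}\right|=1$), I set $\delta_{B}\!\left(t,\zeta\right)=s_{i}$ for every $t\in P_{i}$ and every $i$, while on the old letters $B$ agrees with $R_{A,S}$. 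Then $B$ has $\left|Q_{A,S}\right|\le\left|Q\right|\le n$ states, its alphabet has one more letter than $X$, and since adding transitions to $R_{A,S}$ cannot spoil the $j$ pairs that make $R_{A,S}$ strongly connected, $R_{A,S}\in\mathcal{C}_{j}$ yields $B\in\mathcal{C}_{j}$. Both displayed inequalities then follow by taking maxima, once I prove $\car\!\left(B\right)\ge\csub\!\left(A,S\right)$.

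This reduces to two claims: (a) $B$ is carefully synchronizing, and (b) every careful reset word of $B$ has length at least $\csub\!\left(A,S\right)$. For (a) I exhibit the word $\zeta w$, where $w$ is a shortest careful reset word of $S$ in $A$: the letter $\zeta$ is total on $Q_{A,S}$ and sends the whole state set onto $\left\{ s_{1},\dots,s_{\left|S\right|}\right\} =S$, after which $w$ drives $S$ to a single state while—by the definition of $Q_{A,S}$—keeping every active state inside $Q_{A,S}$, so all transitions stay defined in $B$. For (b) I first dispose of a careful reset word $v$ not using $\zeta$, so $v\in X^{\star}$: since $B$ keeps only the transitions of $A$ that remain in $Q_{A,S}$, definedness of $v$ on $S$ forces $\delta\!\left(s,v\right)=\delta_{B}\!\left(s,v\right)$ for all $s\in S$, whence $v$ is a careful reset word of $S$ in $A$ and $\left|v\right|\ge\csub\!\left(A,S\right)$.

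The crux is the case where $v$ does use $\zeta$, and here I exploit the transversal property packaged into $\mathcal{M}_{\mathrm{P}}$. The governing sublemma is exactly the defining condition of $\mathcal{M}_{\mathrm{P}}$: whenever a word $u$ keeps $\delta\!\left(S,\cdot\right)$ inside $Q_{A,S}$ along all its prefixes, the set $\delta\!\left(S,u\right)$ is either a singleton (and $u$ is already a careful reset word of $S$) or a full transversal meeting every block $P_{i}$. Using this I argue by induction along $v$ that, as long as fewer than $\csub\!\left(A,S\right)$ letters have been read, the active set $\delta_{B}\!\left(Q_{A,S},\cdot\right)$ meets every block: right after each $\zeta$ the active set is restored to exactly $S$ (a full transversal, since $\zeta$ has full support), and a run of old letters keeps it a transversal unless that run is itself a careful reset word of $S$, which would already cost $\csub\!\left(A,S\right)$ letters. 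In particular no $\zeta$ can ever collapse the active set to a singleton, so the suffix of $v$ after its last $\zeta$ lies in $X^{\star}$, starts from the full transversal $S$, and must carefully reset $S$; hence $\left|v\right|\ge\csub\!\left(A,S\right)$.

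The main obstacle is precisely this last inductive argument. The letter $\zeta$ is a strong collapsing letter, which a priori threatens to create careful reset words of $B$ far shorter than $\csub\!\left(A,S\right)$ by steering the process onto an easy-to-synchronize proper subset of $S$. Ruling this out is exactly what the disjointness and the ``singleton-or-transversal'' clause of $\mathcal{M}_{\mathrm{P}}$ are for: they guarantee that a proper subset of $S$ never arises as an active set before synchronization, so $\zeta$ can only ever reproduce the full $S$. I expect the delicate point to be the bookkeeping—tracking which blocks remain represented across several occurrences of $\zeta$ and verifying that $v$ stays defined on all of $Q_{A,S}$ at every step—whereas the construction and claim (a) are routine.
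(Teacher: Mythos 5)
Your proposal is correct and follows essentially the same route as the paper: you build the same auxiliary PFA (the $S$-relevant automaton $R_{A,S}$ plus one fresh letter sending each block $P_{i}$ to its unique representative in $S\cap P_{i}$), synchronize it by $\zeta w$, and bound shortest careful reset words from below by splitting at the last occurrence of the new letter and invoking the singleton-or-transversal clause of $\mathcal{M}_{\mathrm{P}}$. The only difference is presentational—you spell out the induction across multiple occurrences of $\zeta$ that the paper compresses into a single sentence.
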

\begin{proof}
The first claim follows easily from the second. So, take any $\left\langle A,S\right\rangle \in\mathcal{M}_{\mathrm{P}}$
with $A=\left(Q,X_{A},\delta_{A}\right)$ and $S\subseteq Q$, together
with the sets $P_{1},\dots,P_{\left|S\right|}$ from the definition
of $\mathcal{M}_{\mathrm{P}}$. By adding a letter $\alpha$ to the
automaton $R_{A,S}$, we construct a carefully synchronizing PFA $B=\left(Q_{A,S},X_{B},\delta_{B}\right)$
with $\car\!\left(B\right)\ge\csub\!\left(A,S\right)$. Let $X_{B}=X_{A}\cup\left\{ \alpha\right\} .$
For each $s\in Q_{A,S}$ we find the $i$ such that $s\in P_{i}$
and define 
\[
\delta_{B}\!\left(s,\alpha\right)=q_{i},
\]
where $q_{i}$ is the only state lying in $S\cap P_{i}$, as guaranteed
by the membership in $\mathcal{M}_{\mathrm{P}}$. The letters of $X_{A}$
act in $B$ as they do in $R_{A,S}$.
\begin{itemlist}
\item It is easy to check that the automaton $B$ is carefully synchronizing
by $\alpha w$ for any $w\in X_{A}^{\star}$ that is a careful reset
word of $S$ in $A$.
\item On the other hand, take a shortest careful reset word $v$ of $B$.
If $\alpha$ does not occur in $v$, then $v$ is a careful reset
word of $S$ in $A$, so $\left|v\right|\ge\csub\left(A,S\right)$.
Otherwise, denote $v=v_{0}\alpha v_{1}$ where $v_{0}\in X_{B}^{\star}$
and $v_{1}\in X_{A}^{\star}$. By the membership in $\mathcal{M}_{\mathrm{P}}$
we have $\left|\delta\!\left(S,v_{0}\right)\cap P_{i}\right|=1$ for
each $i=1,\dots,\left|S\right|$ and thus $\delta_{B}\!\left(S,v_{0}\alpha\right)=S$.
It follows that $v_{1}$ is a careful reset word of $S$ in $A$,
so $\left|v\right|\ge\csub\left(A,S\right)$.
\end{itemlist}
\end{proof}

\subsection{Decreasing the alphabet size}

The following method is quite simple and has been already used in
the literature \citep{BER4}. It modifies an automaton in order to
decrease the alphabet size while preserving high synchronization thresholds. 
\begin{lemma}
\label{lem: bin}For each $n,k\ge1$ it holds that\end{lemma}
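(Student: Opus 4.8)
The plan is to reduce the alphabet of an arbitrary $k$-letter automaton to two letters by encoding each original letter as a short binary block, at the cost of multiplying the number of states by $k$. Given a witness $A=(Q,X,\delta)$ with $X=\{x_0,\dots,x_{k-1}\}$ and a (carefully synchronizable) subset $S\subseteq Q$, I would build $B=(Q\times\{0,\dots,k-1\},\{a,b\},\delta_B)$ whose second coordinate serves as a \emph{mode counter}. The letter $a$ advances the counter cyclically and leaves the $Q$-coordinate untouched, $\delta_B((q,i),a)=(q,(i+1)\bmod k)$, while the letter $b$ applies the currently selected original letter and resets the counter, $\delta_B((q,i),b)=(\delta(q,x_i),0)$ whenever $\delta(q,x_i)$ is defined and $\und$ otherwise. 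The lifted subset is $S_B=S\times\{0\}$. This has $nk$ states and two letters, so it witnesses the claimed inequality once I verify that the shortest reset word grows appropriately.

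For the easy direction I would check that $B$ inherits synchronizability: if $w=x_{i_1}\cdots x_{i_\ell}$ is a (careful) reset word of $S$ in $A$, then the block encoding $a^{i_1}b\,a^{i_2}b\cdots a^{i_\ell}b$ is a (careful) reset word of $S_B$ in $B$, since after each block every active state has advanced its counter together and then been reset to $0$ while its $Q$-coordinate underwent exactly the corresponding original transition. Carefulness is preserved because the original transition used in each block is defined for all active states precisely when $w$ was careful. This shows $S_B$ is synchronizable whenever $S$ is.

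The load-bearing direction is the lower bound on $\sub(B,S_B)$, and it rests on a single invariant: the mode coordinate is a deterministic function of the applied prefix, namely the number of $a$'s since the last $b$, taken $\bmod\,k$. Hence \emph{all} simultaneously active states always share one common mode, so every occurrence of $b$ applies one unambiguously determined original letter. Given any reset word $v$ of $S_B$ in $B$, deleting its $a$'s and replacing each $b$ by the original letter it selects yields a word $v'\in X^\star$ with $\delta(S,v')$ a singleton; thus $|v'|$ equals the number of $b$-occurrences in $v$, giving $|v|\ge|v'|\ge\sub(A,S)$. Taking the maximum over all $A$ and $S$ yields the stated inequality, and the same argument with ``careful'' inserted throughout handles $\car$ and $\csub$.

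The main obstacle I anticipate is bookkeeping around partiality and the boundary of the simulation rather than any deep difficulty: I must make sure that the projection $v\mapsto v'$ is well defined even when $v$ ends in a partial block (a suffix of $a$'s with no trailing $b$, which simply contributes no letter to $v'$), that no run of $B$ can mix modes — ruled out by the invariant above — and that undefined transitions of $B$ correspond exactly to undefined transitions of $A$ so that carefulness transfers in both directions. If the statement additionally requires preservation of strong connectivity or a prescribed constant in the state count, I would combine this encoding with the $\psi_i$-gadget idea of Lemma~\ref{lem: car sc} and Lemma~\ref{lem: sub sc}, but the alphabet-reduction core is exactly the mode-counter construction above.
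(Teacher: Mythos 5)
Your construction is the same product-with-a-mode-counter idea as the paper's: the paper takes $Q_B=Q_A\times X_A$, a letter $\alpha$ that applies the currently selected letter and resets the mode to $a_0$, and a letter $\beta$ that advances the mode. For subset synchronization (and careful subset synchronization) your argument is sound: since $S_B=S\times\{0\}$ starts in a common mode, your invariant holds, the projection $v\mapsto v'$ is well defined, and $|v|\ge|v'|\ge\sub(A,S)$ follows; the cyclic counter also preserves strong connectivity directly (no $\psi_i$-gadgets needed), in fact slightly more cleanly than the paper's version.

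However, there is a genuine gap in the claim that ``the same argument with careful inserted throughout handles $\car$.'' For $\car$ the initial active set is \emph{all} of $Q\times\{0,\dots,k-1\}$, so your invariant (``all active states share one common mode'') fails from the very start: all $k$ modes are active simultaneously. Worse, your letter $a$ acts as a bijection on $Q\times\{0,\dots,k-1\}$ (it cyclically permutes modes), so the full set stays active until the first $b$, and that first $b$ is careful only if $\delta(q,x_i)$ is defined for \emph{every} pair $(q,i)$ --- i.e.\ only if $A$ is a total DFA. When $A$ is a proper PFA (which is exactly the case of interest, and exactly how the lemma is used in the paper, e.g.\ on the outputs of Lemma~\ref{lem: car sc} whose letters $\psi_i$ are defined on a single state), your $B$ is not carefully synchronizing at all, so it witnesses nothing. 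The paper avoids this with two ingredients you are missing: (1) it observes that any carefully synchronizing PFA has at least one letter $a_m$ defined on every state, and (2) it makes the mode-advancing letter $\beta$ \emph{saturate} at $a_m$ rather than cycle, so that the prefix $\beta^m\alpha$ first funnels all states into the common mode $a_m$, carefully applies the everywhere-defined $a_m$, and lands everyone in mode $a_0$; only from that point does your common-mode invariant (and the projection argument) take over. Your construction for part (i) is fine, but part (ii) needs this collapsing mechanism, or some equivalent of it.
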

\begin{romanlist}
\item $\sub^{\mathcal{AL}_{k}}\!\left(n\right)\le\sub^{\mathcal{AL}_{2}}\!\left(k\cdot n\right)$
~and~ $\sub^{\mathcal{AL}_{k}\cap\mathcal{SC}}\!\left(n\right)\le\sub^{\mathcal{AL}_{2}\cap\mathcal{SC}}\!\left(k\cdot n\right),$
\item $\car^{\mathcal{AL}_{k}}\!\left(n\right)\le\car^{\mathcal{AL}_{2}}\!\left(k\cdot n\right)$
~and~ $\car^{\mathcal{AL}_{k}\cap\mathcal{SC}}\!\left(n\right)\le\car^{\mathcal{AL}_{2}\cap\mathcal{SC}}\!\left(k\cdot n\right).$\end{romanlist}
\begin{proof}
Take a PFA $A=\left(Q_{A},X_{A},\delta_{A}\right)$ with $X_{A}=\left\{ a_{0},\dots,a_{m}\right\} $.
We define a PFA $B=\left(Q_{B},X_{B},\delta_{B}\right)$ as follows:
$Q_{B}=Q_{A}\times X_{A}$, $X_{B}=\left\{ \alpha,\beta\right\} $,
and
\begin{eqnarray*}
\delta_{B}\!\left(\left(s,a_{i}\right),\alpha\right) & = & \begin{cases}
\left(\delta_{A}\!\left(s,a_{i}\right),a_{0}\right) & \mbox{if }\delta_{A}\!\left(s,a_{i}\right)\mbox{ is defined},\\
\und & \mbox{otherwise},
\end{cases}\\
\delta_{B}\!\left(\left(s,a_{i}\right),\beta\right) & = & \begin{cases}
\left(s,a_{i+1}\right) & \mbox{if }i<m,\\
\left(s,a_{m}\right) & \mbox{if }i=m
\end{cases}
\end{eqnarray*}
for each $i=0,\dots,m$. The construction of $B$ applies to both
the claims:
\begin{romanlist}
\item Let $A$ be a DFA. We choose a synchronizable $S_{A}\subseteq Q_{A}$
and denote $S_{B}=S_{A}\times\left\{ a_{0}\right\} $. It is not hard
to see that reset words of $S_{B}$ in $B$ are in a one-to-one correspondence
with reset words of $S_{A}$ in $A$. A word $a_{i_{1}}\dots a_{i_{d}}\in X_{A}^{\star}$
corresponds to $\left(\beta^{i_{1}}\alpha\right)\dots\left(\beta^{i_{d}}\alpha\right)\in X_{B}^{\star}$. 
\item Let $A$ be carefully synchronizing. We can suppose that $\delta_{A}\!\left(s,a_{m}\right)$
is defined on each $s\in Q_{A}$ since for a carefully synchronizing
PFA there always exists such letter. For any careful reset word $a_{i_{1}}\dots a_{i_{d}}$
of $A$, the word $\beta^{m}\alpha\left(\beta^{i_{1}}\alpha\right)\dots\left(\beta^{i_{d}}\alpha\right)$
is a careful reset word of $B$. On the other hand, any careful reset
word of $B$ is also a careful reset word of the subset $Q_{A}\times\left\{ a_{0}\right\} \subseteq Q_{B}$,
whose careful reset words are in a one-to-one correspondence with
careful reset words of $A$, like in the previous case.
\end{romanlist}
Since $\delta_{B}\!\left(\left(s,a_{m}\right),\alpha\right)$ is defined
for each $s\in S_{A}$, it is not hard to check that if $A$ is strongly
connected, so is $B$.
\end{proof}

\section{The New Lower Bounds\label{sec:The-New-Lower Bounds}}

\subsection{The key construction}

Let us present the central construction of the present paper. We build
a series of DFA with a constant-size alphabet and a constant structure
of strongly connected components, together with subsets that require
strongly exponential reset words. Moreover, the pairs automaton-subset
are of the special kind represented by $\mathcal{M}_{\mathrm{P}}$,
so a reduction to careful synchronization of PFA, as introduced in
Lemma \ref{lem:sub->car}, is possible.
\begin{lemma}
\label{lem: de bruijn automaton}For infinitely many $m\ge1$ it holds
that
\[
\sub^{\mathcal{AL}_{4}\cap\mathcal{C}_{2}\cap\mathcal{C}_{2}^{\mathrm{R}}\cap\mathcal{M}_{\mathrm{P}}}\!\left(5m+\log m+3\right)\ge2^{m}\cdot\left(\log m+1\right)+1.
\]
\end{lemma}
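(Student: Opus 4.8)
The plan is to construct, for each suitable $m$, a DFA over a four-letter alphabet together with a subset $S$ of $m+1$ states whose shortest reset word has length roughly $2^m(\log m+1)$. The name of the lemma (\emph{de Bruijn automaton}) is the crucial hint: the shortest reset word should be forced to run through a long path that visits (almost) all $2^m$ binary strings of length $m$, in the spirit of the de Bruijn graph $B(2,m)$, whose vertices are the $2^m$ binary words of length $m$ and whose edges shift in a new bit. An Eulerian/Hamiltonian traversal of such a graph naturally has length $2^m$, and multiplying by a factor of $\log m+1$ to handle an auxiliary ``counter'' gadget would account for the $2^m\cdot(\log m+1)$ term. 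The state count $5m+\log m+3$ signals that the automaton is built from a constant number of ``tracks'' each of size about $m$ (hence the coefficient $5$), plus a $\log m$-bit binary counter and a constant number of special states.

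First I would fix the core synchronization mechanism. The idea is to encode the current de Bruijn vertex (an $m$-bit register) in the active subset, so that one distinguished active state records a pointer into a length-$m$ cyclic structure while companion tracks store the bits being read and written. Two of the four letters would drive the ``shift'' of the de Bruijn register and the reading/writing of a chosen bit; because the alphabet is binary in spirit, forcing the register to pass through essentially all $2^m$ configurations before the set can collapse is what yields the $2^m$ lower bound. The remaining letters, together with the $\log m$-bit counter, would enforce that between two shifts the machine must cycle through all $\log m+1$ counter positions, multiplying the length by the factor $(\log m+1)$; the $+1$ at the very end is the final collapsing letter, exactly as in the $+1$ terms appearing in Lemmas \ref{lem: car->sub}, \ref{lem: sub sc}, and the careful-synchronization reductions.

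Next I would verify the three structural memberships demanded by the superscript, since the whole point is to feed this construction into the earlier reductions. For $\mathcal{M}_{\mathrm{P}}$ I would exhibit the partition $P_1,\dots,P_{|S|}$ of the $S$-relevant part $Q_{A,S}$: the natural choice is to let each $P_i$ be one ``track'', so that along any non-resetting prefix each track carries exactly one active state, which is precisely the $\bigl|\delta(S,v)\cap P_i\bigr|=1$ condition; this is what later lets Lemma \ref{lem:sub->car} add a single letter $\alpha$ and convert the instance to careful synchronization. For $\mathcal{C}_2$ and $\mathcal{C}_2^{\mathrm{R}}$ I would check that adding just two back-edges makes the automaton (respectively its $S$-relevant subautomaton) strongly connected, reflecting the two sink-like deadlock states that the construction presumably still uses before the $\mathcal{SC}$-reduction of Lemma \ref{lem: sub sc} is applied.

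The main obstacle, and the bulk of the real work, is the \emph{lower bound} on the length of every reset word, i.e.\ showing the shortest word cannot short-circuit the exponential traversal. The easy direction is exhibiting one reset word of the claimed length; the hard direction is proving optimality. The crux will be an invariant argument: I would track, as a function of the active subset, the de Bruijn register and counter, and show that (i) the active set cannot shrink below size $|S|$ until a final designated letter is applied, and (ii) each ``productive'' shift of the register requires a full pass through the counter, while the register must realize each of the $2^m$ values at least once. Concretely one must rule out any shortcut in which several register values are skipped or the counter is only partially traversed, typically by showing that any deviation either leaves the set un-synchronized or activates two $\rho$-equivalent states, making it blind. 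Pinning down this invariant precisely — and confirming that the arithmetic $5m+\log m+3$ states and $2^m(\log m+1)+1$ length match exactly rather than merely up to constants — is where the delicate bookkeeping lies.
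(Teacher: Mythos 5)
Your scaffolding matches the paper's (two sink states $\mathrm{D},\overline{\mathrm{D}}$ with $\mathrm{D}\in S$, a $(\log m)$-size gadget forcing the reset word into blocks of length $\log m+1$, a partition into per-track cells witnessing $\mathcal{M}_{\mathrm{P}}$, two added edges for $\mathcal{C}_{2}$ and $\mathcal{C}_{2}^{\mathrm{R}}$), but the engine that actually produces the factor $2^{m}$ is misidentified, and that engine is the whole content of the lemma. You propose that the active subset encode an $m$-bit shift register traversing the de Bruijn graph $B(2,m)$, so that the reset word is forced to visit (almost) all $2^{m}$ vertices. Subset-synchronization semantics cannot force such a traversal: in a shift register every configuration is reachable from every other in at most $m$ shifts, and a polynomial-size automaton has no way to remember which of the $2^{m}$ configurations have already been visited. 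So whichever configuration triggers the final collapse, a shortest reset word could steer straight to it in $O(m)$ blocks, and your construction would yield only a linear bound. Your own hedge --- that ruling out shortcuts is ``where the delicate bookkeeping lies'' --- is exactly the missing proof, and as sketched it cannot be completed.

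The paper's mechanism is different and rigid enough to work: the $m$ tracks are two-state \emph{switches} that jointly store an $m$-bit number, and the construction forces every block $v_{t}=\bin\left(t\right)\kappa$ to perform an increment by one. Each block carries a $(\log m)$-bit \emph{address} $t$; the switch at position $i$ compares $t$ with its own significance $\pi\!\left(i\right)$ and reacts accordingly (reset to $\mathbf{0}$, flip to $\mathbf{1}$, or stay). If $t$ exceeds the position of the least significant $\mathbf{0}$, the forbidden sink $\overline{\mathrm{D}}$ becomes active, so the word is not a reset word at all; if $t$ is smaller, the switch configuration repeats an earlier one, contradicting minimality. Hence the unique shortest reset word counts from all-zeros to all-ones, giving exactly $2^{m}$ blocks of length $\log m+1$ plus the final $\omega$. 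Note also that the de Bruijn sequence plays a role entirely different from the one you assign it: it has order $\log m$ (length $m$, not $2^{m}$) and serves purely as compact wiring, letting each of the $m$ switches implement the comparison against its own distinct address $\pi\!\left(i\right)$ with only five states per switch --- this is what keeps the state count at $5m+\log m+3$. Without the forced-increment mechanism, or an equally rigid substitute that pins down the entire move sequence, your plan does not give an exponential lower bound.
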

\begin{proof}
Suppose $m=2^{k}$. For each $t\in0,\dots,m-1$ we denote by $\tau=\bin\left(t\right)$
the standard $k$-digit binary representation of $t$, i.e., a word
from $\left\{ \mathbf{0},\mathbf{1}\right\} ^{k}$. By a classical
result proved in \citep{stm1} there is a \emph{De Bruijn sequence}
$\xi=\xi_{0}\dots\xi_{m-1}$ consisting of letters $\xi_{i}\in\left\{ \mathbf{0},\mathbf{1}\right\} $
such that each word $\tau\in\left\{ \mathbf{0},\mathbf{1}\right\} ^{k}$
appears exactly once as a cyclic factor of $\xi$ (i.e., it is a factor
or begins by a suffix of $\xi$ and continues by a prefix of $\xi$).
Let us fix such $\xi$. By $\pi\!\left(i\right)$ we denote the number
$t$, whose binary representation $\bin\left(t\right)$ starts by
$\xi_{i}$ in $\xi$. Note that $\pi$ is a permutation of $\left\{ 0,\dots,m-1\right\} $.
Set
\begin{eqnarray*}
Q & = & \left(\left\{ 0,\dots,m-1\right\} \times\left\{ \mathbf{0},\mathbf{0^{\downarrow}},\mathbf{1},\mathbf{1^{\downarrow}},\mathbf{1^{\uparrow}}\right\} \right)\cup\left\{ \mathrm{C}_{0},\dots,\mathrm{C}_{k},\mathrm{D},\overline{\mathrm{D}}\right\} ,\\
X & = & \left\{ \mathbf{0},\mathbf{1},\kappa,\omega\right\} ,\\
S & = & \left(\left\{ 0,\dots,m-1\right\} \times\left\{ \mathbf{0}\right\} \right)\cup\left\{ \mathrm{C}_{0},\mathrm{D}\right\} .
\end{eqnarray*}
Figure 
\begin{figure}
\begin{centering}
\includegraphics{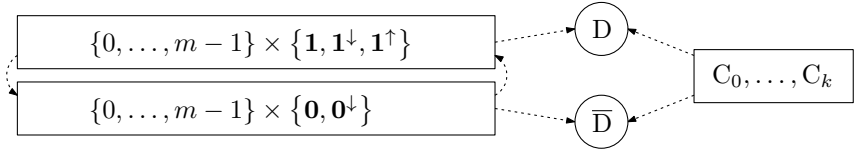}
\par\end{centering}

\caption{A connectivity pattern of the automaton $A$.\label{fig: overall}}
\end{figure}
\ref{fig: overall} visually distinguishes main parts of $A$. The
states $\mathrm{D}$ and $\overline{\mathrm{D}}$ are sink states.
Together with $\mathrm{D}\in S$ it implies that any reset word of
$S$ takes the states of $S$ to $\mathrm{D}$ and that the state
$\overline{\mathrm{D}}$ must not become active during its application.
The states $\mathrm{C}_{0},\dots,\mathrm{C}_{k}$ guarantee that any
reset word of $S$ lies in
\begin{equation}
\left(\left\{ \mathbf{0,1}\right\} ^{k}\kappa\right)^{\star}\omega X^{\star}.\label{eq: rlang}
\end{equation}
Indeed, as defined by Figure \ref{fig: watch}, no other word takes
$\mathrm{C}_{0}$ to $\mathrm{D}$. Let the letter $\omega$ act as
follows:
\begin{eqnarray*}
\left\{ 0,\dots,m-1\right\} \times\left\{ \mathbf{1}\right\} ,\mathrm{C}_{0},\mathrm{D} & \;\;\overset{\omega}{\longrightarrow}\;\; & \mathrm{D},\;\;\;\;\;\;\;\;\;\;\;\;\\
\left\{ 0,\dots,m-1\right\} \times\left\{ \mathbf{0},\mathbf{0^{\downarrow}},\mathbf{1^{\downarrow}},\mathbf{1^{\uparrow}}\right\} ,\mathrm{C}_{1},\dots,\mathrm{C}_{\log m},\overline{\mathrm{D}} & \;\;\overset{\omega}{\longrightarrow}\;\; & \overline{\mathrm{D}}.\;\;\;\;\;\;\;\;\;\;\;\;
\end{eqnarray*}
We see that $\omega$ maps each state to $\mathrm{D}$ or $\overline{\mathrm{D}}$.
This implies that once $\omega$ occurs in a reset word of $S$, it
must complete the synchronization. In order to map $\mathrm{C}_{0}$
to $\mathrm{D}$, the letter $\omega$ \emph{must }occur, so any shortest
reset word of $S$ is exactly of the form
\begin{equation}
w=\left(\tau_{1}\kappa\right)\dots\left(\tau_{d}\kappa\right)\omega,\label{eq: rlang2}
\end{equation}
where $\tau_{j}\in\left\{ \mathbf{0},\mathbf{1}\right\} ^{k}$ for
each $j$.

\begin{figure}
\begin{centering}
\includegraphics{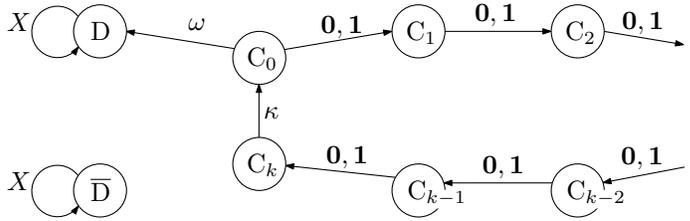}
\par\end{centering}

\caption{\label{fig: watch}A part of $A$. All the outgoing transitions that
are not depicted lead to $\overline{\mathrm{D}}$.}
\end{figure}

The two biggest parts depicted by Figure \ref{fig: overall} are very
similar to each other. The letters $\mathbf{0}$ and $\mathbf{1}$
act on them as follows:
\[
\begin{aligned}\left(i,\mathbf{0}\right)\overset{\mathbf{0}}{\longrightarrow} & \begin{cases}
\left(i+1,\mathbf{0}\right) & \mbox{if }\ensuremath{\xi_{i}=\mathbf{0}},\\
\left(i+1,\mathbf{0^{\downarrow}}\right) & \mbox{if }\ensuremath{\xi_{i}=\mathbf{1}},
\end{cases}\\
\left(i,\mathbf{0}\right)\overset{\mathbf{1}}{\longrightarrow} & \begin{cases}
\overline{\mathrm{D}} & \mbox{if }\ensuremath{\xi_{i}=\mathbf{0}},\\
\left(i+1,\mathbf{0}\right) & \mbox{if }\ensuremath{\xi_{i}=\mathbf{1}},
\end{cases}
\end{aligned}
\mbox{ }\qquad\begin{aligned}\left(i,\mathbf{1}\right)\overset{\mathbf{0}}{\longrightarrow} & \begin{cases}
\left(i+1,\mathbf{1}\right) & \mbox{if }\ensuremath{\xi_{i}=\mathbf{0}},\\
\left(i+1,\mathbf{1^{\downarrow}}\right) & \mbox{if }\ensuremath{\xi_{i}=\mathbf{1}},
\end{cases}\\
\left(i,\mathbf{1}\right)\overset{\mathbf{1}}{\longrightarrow} & \begin{cases}
\left(i+1,\mathbf{1^{\uparrow}}\right) & \mbox{if }\ensuremath{\xi_{i}=\mathbf{0}},\\
\left(i+1,\mathbf{1}\right) & \mbox{if }\ensuremath{\xi_{i}=\mathbf{1}},
\end{cases}
\end{aligned}
\]
and $\left(i,\mathbf{b}\right)\overset{\mathbf{0},\mathbf{1}}{\longrightarrow}\left(i+1,\mathbf{b}\right)$
for each $\mathbf{b}=\mathbf{0^{\downarrow}},\mathbf{1^{\downarrow}},\mathbf{1^{\uparrow}}$,
using the addition modulo $m$ everywhere. For example, Figure \ref{fig: ex dec}
depicts a part of $A$ for $m=8$ and for a particular De Bruijn sequence
$\xi$. Figure \ref{fig: kappa} defines the action of $\kappa$ on
the states $\left\{ i\right\} \times\left\{ \mathbf{0},\mathbf{0^{\downarrow}},\mathbf{1},\mathbf{1^{\downarrow}},\mathbf{1^{\uparrow}}\right\} $
for any $i$, so the automaton $A$ is completely defined.
\begin{figure}
\begin{centering}
\includegraphics{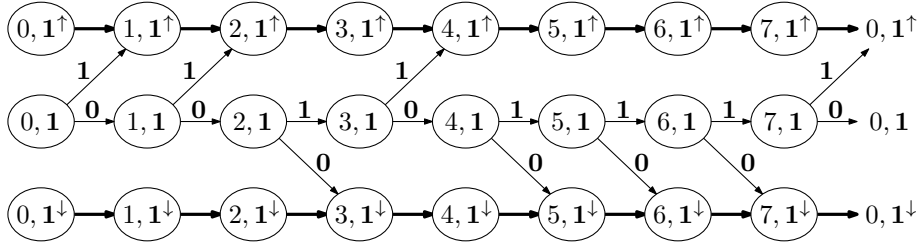}
\par\end{centering}

\caption{\label{fig: ex dec}A part of $A$ assuming $m=8$ and $\xi=\mathbf{00101110}$.
Bold arrows represent both $\mathbf{0},\mathbf{1}$.}
\end{figure}
 
\begin{figure}
\begin{minipage}[t]{0.45\columnwidth}%
\begin{center}
\includegraphics{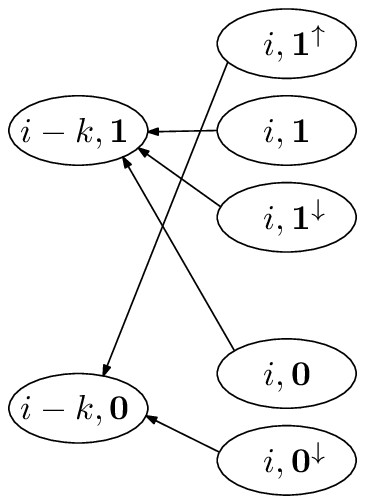}\caption{\label{fig: kappa}The action of the letter $\kappa$, with subtraction
modulo $m$.}

\par\end{center}%
\end{minipage}\hfill{}%
\begin{minipage}[t]{0.45\columnwidth}%
\begin{center}
\includegraphics{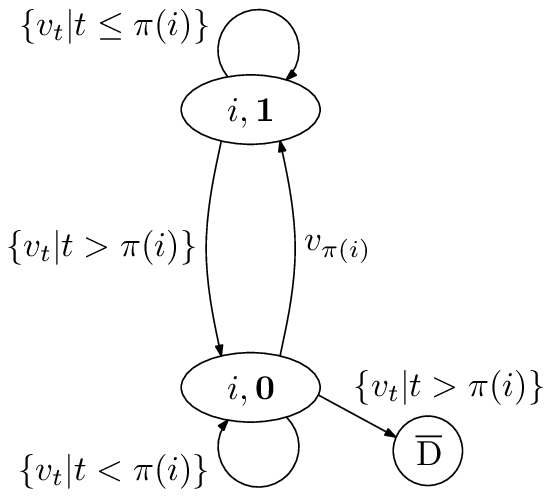}\caption{\label{fig: phase}The action of $v_{0},\dots,v_{m-1}$ on the $i$-th
switch.}

\par\end{center}%
\end{minipage}
\end{figure}

Let $w$ be a shortest reset word of $S$ in $A$. It is necessarily
of the form (\ref{eq: rlang2}), so it makes sense to denote $v_{t}=\bin\left(t\right)\kappa$
and treat $w$ as
\begin{equation}
w=v_{t_{1}}\dots v_{t_{d}}\omega\in\left\{ v_{0},\dots,v_{m-1},\omega\right\} ^{\star}.\label{eq: rlang3}
\end{equation}
The action of each $v_{t}$ is depicted by Figure \ref{fig: phase}.
It is a key step of the proof to confirm that Figure \ref{fig: phase}
is correct. Indeed:
\begin{itemlist}
\item Starting from a state $\left(i,\mathbf{1}\right)$, a word $\bin\left(t\right)$
takes us through a kind of decision tree to one of the states $\left(i+k,\mathbf{1^{\downarrow}}\right),\left(i+k,\mathbf{1}\right),\left(i+k,\mathbf{1^{\uparrow}}\right)$,
depending on whether $t$ is less than, equal to, or greater than
$\pi\!\left(i\right)$, respectively. This is guaranteed by wiring
the sequence $\xi$ into the transition function, see Figure \ref{fig: ex dec}.
The letter $\kappa$ then takes us back to $\left\{ i\right\} \times\left\{ \dots\right\} $,
namely to $\left(i,\mathbf{0}\right)$ or $\left(i,\mathbf{1}\right)$.
\item Starting from a state $\left(i,\mathbf{0}\right)$, we proceed similarly,
but in the case of $t>\pi\!\left(i\right)$ we fall into $\overline{\mathrm{D}}$
during the application of $\bin\left(t\right)$. 
\end{itemlist}
It follows that after applying any prefix $v_{t_{1}}\dots v_{t_{j}}$
of $w$, exactly one of the states $\left(i,\mathbf{0}\right),\left(i,\mathbf{1}\right)$
is active for each $i$. We say that \emph{the $i$-th switch is set
to }\textbf{\emph{$\mathbf{0}$}} \emph{or $\mathbf{1}$} \emph{at
time $j$}. Note that $Q_{A}\backslash\left\{ \overline{\mathrm{D}}\right\} $
is the $S$-relevant part of $A$ and that the sets $\left\{ i\right\} \times\left\{ \mathbf{0},\mathbf{0^{\downarrow}},\mathbf{1},\mathbf{1^{\downarrow}},\mathbf{1^{\uparrow}}\right\} $
for $i=0,\dots,m-1$, together with the sets $\left\{ \mathrm{D}\right\} $
and $\left\{ \mathrm{C}_{0},\dots,\mathrm{C}_{k}\right\} $, can play
the role of $P_{1},\dots,P_{m+2}$ in the definition of $\mathcal{M}_{\mathrm{P}}$.

Observe that at time $d$ all the switches are necessarily set to
$\mathbf{1}$ because otherwise the state $\overline{\mathrm{D}}$
would become active by the application of $\omega$. On the other
hand, at time $0$ all the switches are set to $\mathbf{0}$. We are
going to show that in fact during the synchronization of $S$ the
switches together perform a binary counting from $0$ (all the switches
set to $\mathbf{0}$) to $2^{m}-1$ (all the switches set to $\mathbf{1}$).
For each $i$ the\emph{ significance} of the $i$-th switch is given
by the value $\pi\!\left(i\right)$. So the $\pi^{-1}\!\left(m-1\right)$-th
switch carries the most significant digit, the $\pi^{-1}\!\left(0\right)$-th
switch carries the least significant digit and so on. The number represented
in this manner by the switches at time $j$ is denoted by $\mathfrak{b}_{j}\in\left\{ 0,\dots,2^{m}-1\right\} $.
We claim that $\mathfrak{b}_{j}=j$ for each $j$. Indeed:
\begin{itemlist}
\item At time $0$, all the switches are set to $\mathbf{0}$, we have $\mathfrak{b}_{0}=0$.
\item Suppose that $\mathfrak{b}_{j'}=j'$ for each $j'\leq j-1$. We denote
\begin{equation}
\overline{t_{j}}=\min\left\{ \pi\!\left(i\right)\mid i\mbox{-th switch is set to }\mathbf{0}\mbox{ at time }j-1\right\} \label{eq: t is min pos}
\end{equation}
and claim that $t_{j}=\overline{t_{j}}$. Note that $\overline{t_{j}}$
is defined to be the least significance level at which there occurs
a $\mathbf{0}$ in the binary representation of $\mathfrak{b}_{j-1}$.
Suppose for a contradiction that $t_{j}>\overline{t_{j}}$. By the
definition of $\overline{t_{j}}$ the state $\left(\pi^{-1}\!\left(\overline{t_{j}}\right),\mathbf{0}\right)$
lies in $\delta\!\left(S,v_{t_{1}}\dots v_{t_{j-1}}\right)$. But
$v_{t_{j}}$ takes this state to $\overline{\mathrm{D}}$, which is
a contradiction. Now suppose that $t_{j}<\overline{t_{j}}$. In such
case the application of $v_{t_{j}}$ does not turn any switch from
$\mathbf{0}$ to $\mathbf{1}$, so $\mathfrak{b}_{j}\leq\mathfrak{b}_{j-1}$
and thus at time $j$ the configuration of switches is the same at
it was at time $\mathfrak{b}_{j}$. This contradicts the assumption
that $w$ is a shortest reset word. We have proved that $t_{j}=\overline{t_{j}}$
and it remains only to show that the application of $v_{t_{j}}$ performs
the addition of $1$ and so makes the switches represent the value
$\mathfrak{b}_{j-1}+1$.

\begin{itemlist}
\item Consider an $i$-th switch with $\pi\!\left(i\right)<t_{j}$. By the
definition of $\overline{t_{j}}$, it is set to $\mathbf{1}$ at time
$j-1$ and the word $v_{t_{j}}$ sets it to $\mathbf{0}$ at time
$j$. This is what we need because such switches represent a continuous
leading segment of $\mathbf{1}$s in the binary representation of
$\mathfrak{b}_{j-1}$. 
\item The $\pi^{-1}\!\left(t_{j}\right)$-th switch is set from $\mathbf{0}$
to $\mathbf{1}$ by the word $v_{t_{j}}$.
\item Consider an $i$-th switch with $\pi\!\left(i\right)>t_{j}$. The
switch represents a digit of $\mathfrak{b}_{j-1}$ which is more significant
than the $\overline{t_{j}}$-th digit. As we expect, the word $v_{t_{j}}$
leaves such switch unchanged.
\end{itemlist}
\end{itemlist}
Because $\mathfrak{b}_{d}=2^{m}$, we deduce that $d=2^{m}$ and thus
$\left|w\right|=2^{m}\cdot\left(k+1\right)+1$, assuming that a shortest
reset word $w$ exists. But in fact we have also shown that there
is only one possibility for such $w$ and that it is a true reset
word for $S$. The unique $w$ is of the form (\ref{eq: rlang3}),
where $t_{j}$ is the position of the least significant \textbf{$\mathbf{0}$
}in the binary representation of $j-1$.

The automaton $A$ lies in $\mathcal{C}_{2}\cap\mathcal{C}_{2}^{\mathrm{R}}$
since the addition of $\mathrm{D}\longrightarrow\mathrm{C}_{0}$ and
$\overline{\mathrm{D}}\longrightarrow\left(0,\mathbf{0}\right)$ makes
$A$ strongly connected, while the addition of $\mathrm{D}\longrightarrow\mathrm{C}_{0}$
and $\mathrm{C}_{0}\longrightarrow\left(0,\mathbf{0}\right)$ makes
$R_{A,S}$ strongly connected.
\end{proof}

\subsection{The results}

The following theorem presents the main results of the present paper:
\begin{theorem}
\label{thm: main}For infinitely many $n\ge1$ it holds that\end{theorem}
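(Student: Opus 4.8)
The plan is to obtain both inequalities by composing the reductions of Section~\ref{sec:Reductions-between-Thresholds}, all starting from the single construction of Lemma~\ref{lem: de bruijn automaton}. That lemma already produces, for infinitely many $m$, an automaton-subset pair that simultaneously lies in $\mathcal{AL}_4\cap\mathcal{C}_2\cap\mathcal{C}_2^{\mathrm{R}}\cap\mathcal{M}_{\mathrm{P}}$, uses only $n_0=5m+\log m+3$ states, and forces a reset word of length at least $2^m(\log m+1)+1$. This is already a bound of the form $2^{\Omega(n_0)}$; what remains is to transport the construction into $\mathcal{AL}_2\cap\mathcal{SC}$ for each of the two thresholds while keeping the number of states linear in $m$.

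For the subset threshold I would use only the $\mathcal{C}_2$-membership. Lemma~\ref{lem: de bruijn automaton} gives $\sub^{\mathcal{AL}_4\cap\mathcal{C}_2}(n_0)\ge 2^m(\log m+1)+1$. The second inequality of Lemma~\ref{lem: sub sc}, applied with $k=4$ and $j=2$, trades $\mathcal{C}_2$-membership for strong connectivity at the cost of two extra letters and a roughly doubled state count, yielding $\sub^{\mathcal{AL}_6\cap\mathcal{SC}}(2n_0+2)\ge 2^m(\log m+1)+2$. The strong-connectivity-preserving binarization of Lemma~\ref{lem: bin}(i) then collapses the alphabet to two letters and multiplies the state count by $6$, so $\sub^{\mathcal{AL}_2\cap\mathcal{SC}}\bigl(6(2n_0+2)\bigr)\ge 2^m(\log m+1)+2$. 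Since the final state count $N=6(2n_0+2)=\Theta(m)$, we get $2^m=2^{\Omega(N)}$, which is the first claim.

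For the careful threshold I would instead exploit the $\mathcal{M}_{\mathrm{P}}$- and $\mathcal{C}_2^{\mathrm{R}}$-memberships. As the De Bruijn automaton is deterministic, $\sub$ and $\csub$ coincide on it, so the second inequality of Lemma~\ref{lem:sub->car} (with $k=4$, $j=2$) transfers the subset bound to careful synchronization: $\car^{\mathcal{AL}_5\cap\mathcal{C}_2}(n_0)\ge 2^m(\log m+1)+1$. The cheap strong-connectivity trick of Lemma~\ref{lem: car sc} (with $k=5$, $j=2$) adds two letters without touching the states, giving $\car^{\mathcal{AL}_7\cap\mathcal{SC}}(n_0)\ge 2^m(\log m+1)+1$, and Lemma~\ref{lem: bin}(ii) binarizes while preserving strong connectivity, so $\car^{\mathcal{AL}_2\cap\mathcal{SC}}(7n_0)\ge 2^m(\log m+1)+1=2^{\Omega(n_0)}$.

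The argument is in essence careful bookkeeping, so the point to monitor is that every reduction inflates the number of states by only a constant multiplicative factor and a constant additive term, and enlarges the alphabet by only a constant; consequently $m$ stays linear in the final state count and the exponent in $2^m$ remains $\Omega(N)$. The one genuine subtlety is the ordering: the strong-connectivity step must be performed while the alphabet is still enlarged, so that the $\mathcal{C}_j$-structure furnished by Lemma~\ref{lem: de bruijn automaton} can be consumed by Lemmas~\ref{lem: sub sc} and~\ref{lem: car sc}; only afterwards should the alphabet be reduced to two letters by Lemma~\ref{lem: bin}, whose strong-connectivity-preserving variants do not themselves create strong connectivity. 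Tightness, that is $2^{\theta(n)}$, then follows by pairing these lower bounds with the trivial upper bound $\csub(n)\le 2^n-n-1$ from Section~\ref{sec:Preliminaries}.
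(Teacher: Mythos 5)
Your proposal is correct and follows essentially the same route as the paper: the same chain of reductions (Lemma~\ref{lem: de bruijn automaton}, then Lemma~\ref{lem: sub sc} and Lemma~\ref{lem: bin}(i) for the subset threshold; Lemma~\ref{lem:sub->car}, Lemma~\ref{lem: car sc}, and Lemma~\ref{lem: bin}(ii) for the careful threshold), applied in the same order and with the same parameters, yielding the identical state counts $60m+12\log m+48$ and $35m+7\log m+21$. The only cosmetic difference is that you stop at $2^{\Omega(n)}$ rather than absorbing the lower-order terms into $61m$ and $36m$ to obtain the explicit constants $2^{n/61}$ and $2^{n/36}$ stated in the theorem, which your arithmetic already supports.
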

\begin{romanlist}
\item $\sub^{\mathcal{AL}_{2}\cap SC}\!\left(n\right)\ge2^{\frac{n}{61}},$
\item $\car^{\mathcal{AL}_{2}\cap\mathcal{SC}}\!\left(n\right)\ge2^{\frac{n}{36}}.$\end{romanlist}
\begin{proof}
Lemma \ref{lem: de bruijn automaton} says that
\begin{eqnarray}
2^{m}\cdot\left(\log m+1\right)+1 & \le & \sub^{\mathcal{AL}_{4}\cap\mathcal{C}_{2}\cap\mathcal{C}_{2}^{\mathrm{R}}\cap\mathcal{M}_{\mathrm{P}}}\!\left(5m+\log m+3\right)\label{eq: deBru rev}
\end{eqnarray}
for infinitely many $m\ge1$. Now we apply some of the lemmas from
Section \ref{sec:Reductions-between-Thresholds}:
\begin{romanlist}
\item Lemma \ref{lem: sub sc} extends (\ref{eq: deBru rev}) with
\[
\sub^{\mathcal{AL}_{4}\cap\mathcal{C}_{2}}\!\left(5m+\log m+3\right)\le\sub^{\mathcal{AL}_{6}\cap\mathcal{SC}}\!\left(10m+2\cdot\log m+8\right)-1
\]
and Lemma \ref{lem: bin} adds
\[
\sub^{\mathcal{AL}_{6}\cap\mathcal{SC}}\!\left(10m+2\cdot\log m+8\right)-1\le\sub^{\mathcal{AL}_{2}\cap SC}\!\left(60m+12\cdot\log m+48\right)-1.
\]
We chain the three inequalities and deduce
\begin{eqnarray*}
\sub^{\mathcal{AL}_{2}\cap SC}\!\left(60m+12\cdot\log m+48\right) & \ge & 2^{m}\cdot\left(\log m+1\right)+2,\\
\sub^{\mathcal{AL}_{2}\cap SC}\!\left(61m\right) & \ge & 2^{m},\\
\sub^{\mathcal{AL}_{2}\cap SC}\!\left(n\right) & \ge & 2^{\frac{n}{61}}.
\end{eqnarray*}

\item Lemma \ref{lem:sub->car} extends (\ref{eq: deBru rev}) with
\[
\csub^{\mathcal{AL}_{4}\cap\mathcal{C}_{2}^{\mathrm{R}}\cap\mathcal{M}_{\mathrm{P}}}\!\left(5m+\log m+3\right)\le\car^{\mathcal{AL}_{5}\cap\mathcal{C}_{2}}\!\left(5m+\log m+3\right),
\]
while Lemma \ref{lem: car sc} adds 
\[
\car^{\mathcal{AL}_{5}\cap\mathcal{C}_{2}}\!\left(5m+\log m+3\right)\le\car^{\mathcal{AL}_{7}\cap\mathcal{SC}}\!\left(5m+\log m+3\right)
\]
and Lemma \ref{lem: bin} adds
\[
\car^{\mathcal{AL}_{7}\cap\mathcal{SC}}\!\left(5m+\log m+3\right)\le\car^{\mathcal{AL}_{2}\cap SC}\!\left(35m+7\cdot\log m+21\right).
\]
We chain the four inequalities and deduce:
\begin{eqnarray*}
\car^{\mathcal{AL}_{2}\cap SC}\!\left(35m+7\cdot\log m+21\right) & \ge & 2^{m}\cdot\left(\log m+1\right)+1,\\
\car^{\mathcal{AL}_{2}\cap SC}\!\left(36m\right) & \ge & 2^{m},\\
\car^{\mathcal{AL}_{2}\cap SC}\!\left(n\right) & \ge & 2^{\frac{n}{36}}.
\end{eqnarray*}

\end{romanlist}
\end{proof}
Note that there are more subtle results for less restricted classes
of automata:
\begin{proposition}
It holds that $\sub^{\mathcal{AL}_{2}}\!\left(n\right)\ge2^{\frac{n}{21}}$,
$\car^{\mathcal{AL}_{2}}\!\left(n\right)\ge2^{\frac{n}{26}}$, $\sub^{SC}\!\left(n\right)\ge3^{\frac{n}{6}}$,
and $\car^{\mathcal{SC}}\!\left(n\right)\ge3^{\frac{n}{3}}$ for infinitely
many $n\ge1$.\end{proposition}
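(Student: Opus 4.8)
The plan is to recycle the reductions of Section~\ref{sec:Reductions-between-Thresholds} essentially as in the proof of Theorem~\ref{thm: main}, but in each of the four claims one of the two restrictions (strong connectivity, or binary alphabet) is dropped, so the corresponding expensive reduction step can be omitted or replaced by a cheaper one. This is exactly what buys the better constants.

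For the two binary claims I would start again from Lemma~\ref{lem: de bruijn automaton} but skip the strong-connectivity reductions. For $\sub^{\mathcal{AL}_2}$ I would relax the superscript of Lemma~\ref{lem: de bruijn automaton} to $\mathcal{AL}_4$ and feed it straight into Lemma~\ref{lem: bin}(i), avoiding the factor-two state blow-up of Lemma~\ref{lem: sub sc}; this produces $\sub^{\mathcal{AL}_2}(20m+4\log m+12)\ge 2^m$, and since $4\log m+12\le m$ for large $m$ the linear term dominates, giving $\sub^{\mathcal{AL}_2}(21m)\ge 2^m$ and hence $\sub^{\mathcal{AL}_2}(n)\ge 2^{n/21}$. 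For $\car^{\mathcal{AL}_2}$ I would chain Lemma~\ref{lem: de bruijn automaton} with Lemma~\ref{lem:sub->car} to reach $\car^{\mathcal{AL}_5}$, omit Lemma~\ref{lem: car sc}, and then apply Lemma~\ref{lem: bin}(ii), obtaining $\car^{\mathcal{AL}_2}(25m+5\log m+15)\ge 2^m$ and thus $\car^{\mathcal{AL}_2}(n)\ge 2^{n/26}$. Only the elementary absorption of the logarithmic and constant terms into the linear term has to be checked, exactly as in Theorem~\ref{thm: main}.

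For the two strongly connected claims the binary De Bruijn construction is too weak, since it has base $2$ whereas the targets have base $3$; I would instead invoke the classical base-$3$ series of Martyugin~\citep{MAR6} (refining \citep{ITO1short}), which gives $\car(n)\ge 3^{n/3}$ and, applied to subsets, $\sub(n)\ge 3^{n/3}$, at the cost of an alphabet that may grow with $n$ -- harmless here since neither $\sub^{SC}$ nor $\car^{\mathcal{SC}}$ restricts the alphabet. The crucial structural fact is that these automata are \emph{almost} strongly connected: a constant number $j$ of added transitions makes them strongly connected, so each series lies in $\mathcal{C}_j$ for a fixed $j$. For $\car^{\mathcal{SC}}$ I would then apply Lemma~\ref{lem: car sc}, which leaves the number of states untouched (it only enlarges the alphabet), transporting $\car(n)\ge 3^{n/3}$ verbatim to $\car^{\mathcal{SC}}(n)\ge 3^{n/3}$. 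For $\sub^{SC}$ I would use Lemma~\ref{lem: sub sc}, whose swap-congruence construction doubles the number of states: from $\sub(n)\ge 3^{n/3}$ it gives $\sub^{\mathcal{SC}}(2n+2)\ge 3^{n/3}+1$, and the substitution $n\mapsto 2n+2$ halves the rate $n/3$ to $n/6$, yielding the bound $\sub^{SC}(n)\ge 3^{n/6}$ up to the lower-order terms discussed below.

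The main obstacle is structural rather than computational: one must confirm that, for the classical base-$3$ series, the number $j$ of transitions needed to restore strong connectivity is bounded by a constant, so that Lemmas~\ref{lem: car sc} and~\ref{lem: sub sc} apply with only constant (respectively linear) overhead and do not corrupt the exponent; this relies on the specific shape of those constructions and not on any general principle. The remaining care is book-keeping: matching the additive and logarithmic terms in the state counts, and using the exact (slightly super-$3^{n/3}$) form of the classical bound to absorb the additive constant in $2n+2$ and so pin down the stated exponents $21$, $26$, $6$, and $3$.
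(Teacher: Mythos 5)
Your proposal is correct and follows essentially the same route as the paper's own proof: the two binary claims chain Lemma~\ref{lem: de bruijn automaton} with Lemma~\ref{lem: bin} (inserting Lemma~\ref{lem:sub->car} for the careful case), and the two strongly connected claims apply Lemma~\ref{lem: car->sub} together with Lemma~\ref{lem: sub sc} (respectively Lemma~\ref{lem: car sc}) to the construction of \citep{MAR6}. Your one stated ``main obstacle''---verifying that the base-$3$ series needs only a \emph{constant} number $j$ of added transitions---is in fact no obstacle at all: since $\sub^{SC}\!\left(n\right)$ and $\car^{\mathcal{SC}}\!\left(n\right)$ impose no alphabet restriction, $j$ may grow with $n$ (every $n$-state automaton lies in $\mathcal{C}_{n}$, e.g.\ via a Hamiltonian cycle of added transitions), and the state counts produced by Lemmas~\ref{lem: car sc} and~\ref{lem: sub sc} do not depend on $j$.
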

\begin{proof}
The first claim follows easily from Lemmas \ref{lem: de bruijn automaton}
and \ref{lem: bin}, the second one requires also using Lemma \ref{lem:sub->car}
first. The third and the last claim follow from applying Lemmas \ref{lem: car->sub}
and \ref{lem: sub sc} (or Lemma \ref{lem: car sc} respectively)
to the construction from \citep{MAR6}.
\end{proof}

\section{Consequences and Reformulations\label{sec:Consequences-and-reformulations}}

\subsection{Computational problems}

It is well known that the decision about synchronizability of a given
DFA is a polynomial time task, even if we also require an explicit
reset word on the output. A lot of work has been done on such algorithms
in effort to make them produce short reset words in short running
time. However, it has been proven that it is both NP-hard and coNP-hard
(it is actually DP-complete) to recognize the length of \emph{shortest}
reset words for a given DFA, while it is still NP-hard to recognize
its upper bounds or to approximate it with a constant factor, see
references in \citep{OLS1} and \citep{BER4}.

On the other hand, there has not been done much research in computational
complexity of problems concerning subset synchronization and careful
synchronization, although they does not seem to have less chance to
emerge in practice. Namely, the first natural problems in these directions
are 

\begin{flushleft}
\begin{tabular}{ll}
\multicolumn{2}{l}{\noun{Subset synchronizability}}\tabularnewline
Input: & $n$-state DFA $A=\left(Q,X,\delta\right)$, $S\subseteq Q$\tabularnewline
Output: & is there some $w\in X^{\star}$ such that $\left|\delta\!\left(S,w\right)\right|=1$?\tabularnewline
\end{tabular}
\par\end{flushleft}

\begin{flushleft}
\begin{tabular}{ll}
\multicolumn{2}{l}{\noun{Careful synchronizability}}\tabularnewline
Input: & $n$-state PFA $A=\left(Q,X,\delta\right)$,\tabularnewline
Output: & is there some $w\in X^{\star}$ such that $\left(\exists r\in Q\right)\left(\forall s\in Q\right)\delta\!\left(s,w\right)=r$?\tabularnewline
\end{tabular}
\par\end{flushleft}

Both these problems, in contrast to the synchronizability of DFA,
are known to be PSPACE-complete:
\begin{theorem}
\emph{\cite{NAT1,SAN1short}}\noun{~Subset synchronizability} is
a PSPACE-complete problem.
\end{theorem}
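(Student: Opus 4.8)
The plan is to prove the two directions separately: membership in PSPACE and PSPACE-hardness. For membership I would work in the power automaton $\mathcal{P}(A)$, whose states are the subsets $T\subseteq Q$ and whose transition under a letter $x$ sends $T$ to $\delta(T,x)$. By definition $S$ is synchronizable precisely when some singleton is reachable from $S$ in $\mathcal{P}(A)$. Although $\mathcal{P}(A)$ has $2^{n}$ states, every state is stored in $n$ bits and every edge is computable in polynomial time, so a nondeterministic machine can guess a candidate word letter by letter while keeping only the current subset $\delta(S,\text{prefix})$ in memory, accepting as soon as this subset becomes a singleton. This uses only $O(n)$ space, placing the problem in NPSPACE, and Savitch's theorem gives NPSPACE $=$ PSPACE.

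For hardness I would reduce from the \emph{DFA intersection--nonemptiness} problem --- given complete DFA $A_{1},\dots,A_{k}$ over a common alphabet $\Sigma$ with initial states $q_{i}^{0}$ and accepting sets $F_{i}$, decide whether some word is accepted by all of them --- which is PSPACE-complete by Kozen's theorem. Given such an instance I would build a single complete DFA $B$ over $\Sigma\cup\{\#\}$ whose states are the disjoint union $\biguplus_{i}Q_{i}$ together with two sinks $t$ and $\bot$ and one extra always-accepting guard component $A_{0}$ consisting of a single state $q_{0}^{0}$ fixed by every letter of $\Sigma$. Letters of $\Sigma$ act inside each component exactly as the corresponding $\delta_{i}$ and fix $t$ and $\bot$; the new letter $\#$ sends every accepting state to $t$, every non-accepting state of a component to $\bot$, and fixes $t$ and $\bot$. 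I set $S=\{q_{0}^{0},q_{1}^{0},\dots,q_{k}^{0}\}$.

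The correctness hinges on ruling out spurious synchronization, and this is where the construction must be designed carefully. Because the components are disjoint and the $\Sigma$-letters preserve them, any prefix built from $\Sigma$ alone keeps one active token in each of the $k+1$ components, so the active set has size $k+1$ and can never be a singleton before a $\#$ is read. As soon as $\#$ is applied, all tokens leave the components simultaneously and land in the sink pair $\{t,\bot\}$, which is invariant under every letter; hence a singleton can be reached only if this landing set is exactly $\{t\}$. The guard component guarantees that $t$ always appears after a $\#$, so $\{t\}$ is the only attainable singleton, and it occurs iff at the moment of that $\#$ every component sits in an accepting state --- that is, iff the $\Sigma$-prefix read so far is accepted by all of $A_{1},\dots,A_{k}$. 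Thus $S$ is synchronizable in $B$ exactly when $\bigcap_{i}L(A_{i})\neq\emptyset$, and the reduction is plainly polynomial. I expect the main obstacle to be precisely this last verification: making sure no unintended collapse --- in particular an ``all tokens to $\bot$'' synchronization, or a piecemeal one-token-at-a-time collapse to $t$ --- can occur, which is exactly what the guard component and the sink structure are engineered to prevent.
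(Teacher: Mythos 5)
Your proposal is correct and follows essentially the same route as the proof the paper points to: the paper cites this theorem from the literature and notes that hardness is obtained by a polynomial reduction from Kozen's DFA intersection--nonemptiness problem ``using the idea of two sink states,'' which is exactly your construction (your sinks $t$, $\bot$ together with the always-accepting guard playing the role of the sink placed inside $S$), while membership follows from the standard nondeterministic subset-reachability argument plus Savitch's theorem.
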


\begin{theorem}
\emph{\cite{MAR4}}\noun{~Careful synchronizability} is a PSPACE-complete
problem.
\end{theorem}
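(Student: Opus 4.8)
The plan is to establish the two directions of PSPACE-completeness separately: membership and hardness.

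For \emph{membership}, I would work in the power automaton of the input PFA $A=(Q,X,\delta)$. Careful synchronizability is exactly the question of whether some singleton is reachable from the full set $Q$ in the graph on $2^{Q}$ whose edges $T\xrightarrow{x}\delta(T,x)$ are present only when $x$ is defined on every state of $T$. This graph has $2^{|Q|}$ vertices, but a nondeterministic machine can explore it while storing only the current active subset $T$, which fits in $|Q|$ bits: guess the careful reset word one letter at a time, at each step check that the guessed letter is defined on all of $T$ and replace $T$ by $\delta(T,x)$, and accept once $|T|=1$. This runs in nondeterministic polynomial space, so membership in PSPACE follows from Savitch's theorem.

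For \emph{hardness}, I would reduce from the finite-automata intersection problem (given DFA $M_{1},\dots,M_{k}$ over a common alphabet $\Sigma$, is $\bigcap_{i}L(M_{i})\neq\emptyset$?), which is PSPACE-complete. Assume each $M_{i}$ has a single accepting state $f_{i}$ and initial state $q_{i}^{0}$. Build a PFA whose states are the disjoint union of the state sets of all the $M_{i}$ together with one fresh target $d$. Let every $\sigma\in\Sigma$ act inside each copy exactly as the corresponding DFA transition (so $\Sigma$ never merges or leaves a copy), let a fresh letter $r$ send every state of $M_{i}$ to $q_{i}^{0}$, and let a fresh letter $\star$ be defined \emph{only} on the accepting states and on $d$, sending each $f_{i}$ and $d$ itself to $d$; make $d$ a sink. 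Thus $\star$ is a careful step precisely when every currently active state is accepting. If $w\in\bigcap_{i}L(M_{i})$, then $rw\star$ is a careful reset word: $r$ collapses the full state set to $\{q_{1}^{0},\dots,q_{k}^{0}\}$, the word $w$ drives each copy to its $f_{i}$, and $\star$ merges all of them to $d$. Conversely, since $\Sigma$ and $r$ never merge distinct copies, the only letter that can reduce the active set to a singleton is $\star$; hence any careful reset word contains $\star$, and at its first occurrence the active set lies entirely within the accepting states. Tracking the active set back to the last preceding $r$ (or to the initial full set if none occurred) shows that the common suffix $w$ read in between satisfies $\delta_{i}(q_{i}^{0},w)=f_{i}$ for every $i$, so $w$ witnesses $\bigcap_{i}L(M_{i})\neq\emptyset$.

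The main obstacle is this converse (soundness) direction: one must rule out \emph{every} alternative route to a singleton, not just the intended $rw\star$. The argument above hinges on two structural features of the construction: the disjointness of the copies, which prevents $\Sigma$- and $r$-transitions from ever collapsing across components, and the fact that the undefined transitions of $\star$ act as a global filter that can fire only when all components are simultaneously accepting. Getting these invariants exactly right — in particular handling a careful word that applies $\star$ to an image of the \emph{full} state set rather than of the reset configuration — is where the care is needed; in that case the stronger ``synchronize all states into accepting states'' condition still implies plain common acceptance (take the image of $q_{i}^{0}$), so no spurious yes-instances arise.
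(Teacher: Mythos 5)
Your proof is correct, and it follows essentially the same route as the proof the paper points to: the paper does not prove this theorem itself but cites Martyugin \cite{MAR4}, remarking only that the argument reduces from Kozen's PSPACE-complete problem of deciding whether given finite acceptors accept a common word --- which is exactly your hardness reduction (with the standard guess-a-letter, store-the-current-subset argument plus Savitch for membership), the undefined transitions of your letter $\star$ playing the role that the paper's sketch attributes to sink states. The one blemish is your opening assumption that each $M_{i}$ has a single accepting state, which is not without loss of generality for DFAs; it is also unnecessary, since letting $\star$ be defined on \emph{all} accepting states of every $M_{i}$ (sending each to $d$) leaves both your completeness and soundness arguments unchanged.
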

Note that such hardness is not a consequence of any lower bound of
synchronization thresholds, because an algorithm does not need to
produce an explicit reset word. The proofs of both the theorems above
make use of a result of Kozen \citep{KOZ1}, which establishes that
it is PSPACE-complete to decide if given finite acceptors with a common
alphabet accept a common word. This problem is polynomially reduced
to our problems\noun{ }using the idea of two sink states. Is it possible
to avoid the non-connectivity here? 

In the case of \noun{Careful synchronizability}, the simple trick
from Lemma \ref{lem: car sc} easily reduces the general problem to
the variant restricted to strongly connected automata, and it turns
out that the method of swap congruences is general enough to perform
such reduction also in the case of \noun{Subset synchronizability:}
\begin{theorem}
The following problems are PSPACE-complete:\end{theorem}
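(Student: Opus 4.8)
The plan is to treat membership and hardness separately and to obtain the two strongly connected (and binary) problems from their already hard unrestricted versions by the connectivity reductions of Section~\ref{sec:Reductions-between-Thresholds}. Membership in PSPACE is immediate: the strongly connected binary instances form a subclass of all instances, so the PSPACE algorithms witnessing the two theorems quoted above apply to them verbatim. All the work therefore lies in PSPACE-hardness, which I would establish by polynomial many-one reductions from the unrestricted \noun{Careful synchronizability} and \noun{Subset synchronizability}.

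For \noun{Careful synchronizability} I would reduce from its general form using the construction of Lemma~\ref{lem: car sc}. Given an arbitrary PFA $A=\left(Q,X,\delta\right)$, first compute in polynomial time a list of pairs $\left(r_1,q_1\right),\dots,\left(r_j,q_j\right)$ whose addition makes $A$ strongly connected; one may simply thread the condensation of $A$ into a single cycle, so $j\le\left|Q\right|$ and $A\in\mathcal{C}_j$. Applying Lemma~\ref{lem: car sc} yields a strongly connected PFA $B$ over the alphabet enlarged by fresh letters $\psi_1,\dots,\psi_j$. The essential observation, as in that lemma, is that $\psi_i$ is defined only on $r_i$; a careful reset word needs every transition defined on the whole current image of $Q$, so $\psi_i$ could be applied only after that image has already shrunk to $\left\{r_i\right\}$, i.e.\ after synchronization is complete. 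Hence $B$ is carefully synchronizing if and only if $A$ is, and the reduction is correct. To descend to a two-letter alphabet I would compose with Lemma~\ref{lem: bin}(ii), which preserves both strong connectivity and the carefully-synchronizing status while multiplying the number of states by the (polynomial) alphabet size.

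For \noun{Subset synchronizability} I would reduce from its general form using the swap-congruence construction of Lemma~\ref{lem: sub sc}. Given a DFA $A$ with a subset $S$, I again pick polynomially many connecting pairs making $A$ strongly connected and build the doubled automaton $B$ with subset $S_B=S\cup\left\{\mathrm{E}\right\}$ and swap congruence $\rho$. Correctness is the equivalence \emph{$S$ is synchronizable in $A$ iff $S_B$ is synchronizable in $B$}. The backward direction is exactly the swap-congruence mechanism: every reset word of $S_B$ must contain some $\psi_i$ (the letters of $X_A$ fix $\mathrm{E}$, so $\mathrm{E}$ can merge with the $S$-part only through a $\psi_i$), and the prefix preceding the first such $\psi_i$ must already collapse $S$, for otherwise two $\rho$-congruent states would be simultaneously active, contradicting the key property of swap congruences stated above. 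The forward direction collapses $S$ by a reset word $w$ and then appends $u\psi_I$, where $u$ drives the common image of $S$ to $r_I$. Finally I would binarize via Lemma~\ref{lem: bin}(i), which again preserves strong connectivity and the yes/no answer.

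The hard part will be the forward direction of the subset reduction, because the distinguished letter $\psi_I$ has to send the target of synchronization to $q_I$ while sending $\mathrm{E}$ to the same state, which forces $r_I$ to be reachable in $A$, using only the original letters, from whatever state $r_0$ the word $w$ collapses $S$ onto. Since the reduction must be defined on all inputs and cannot inspect a reset word in advance, I would choose the connecting data so that a single designated state $r_I$ is reachable from every candidate target at once -- for instance by routing the added edges through one hub and, if no universally reachable state is present, inserting such a hub before doubling. The remaining obstacle is to verify that this choice keeps $\rho$ a swap congruence, keeps $B$ strongly connected, and stays polynomial; granting that, the theorem follows by assembling Lemmas~\ref{lem: car sc}, \ref{lem: sub sc}, and~\ref{lem: bin}.
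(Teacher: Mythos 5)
Your overall route is exactly the paper's: its proof of this theorem is literally ``perform the construction of Lemma~\ref{lem: sub sc} (resp.\ Lemma~\ref{lem: car sc}) and then that of Lemma~\ref{lem: bin}'', and both your PSPACE-membership remark and your \noun{Careful synchronizability} branch match it. The genuine gap is the one you flagged yourself in the \noun{Subset synchronizability} branch and then left open. Your repair---choosing the connecting data so that one designated $r_I$ is reachable from \emph{every} possible synchronization target using only original letters, by ``routing the added edges through one hub'' and, failing that, ``inserting such a hub before doubling''---is unverified and does not obviously work. A state reachable from all others via original letters need not exist (the condensation of $A$ may have several sink components), and inserting a hub is dangerous: a new state that old states never enter under old letters is still not reachable from the candidate targets, while redirecting existing transitions or adding letters that move old states into the hub can change whether $S$ is synchronizable, destroying the reduction. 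So as written, the forward direction of your subset reduction is incomplete.

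The gap closes with no surgery on $A$ at all, because your premise that the connecting word $u$ must lie in $X_A^{\star}$ is wrong: fix $I$ arbitrarily, say $I=1$. After a reset word $w$ of $S_A$ has been applied, the active set in $B$ is $\left\{ r_0,\mathrm{E}\right\}$, and from then on each letter $\psi_i$ with $i\neq I$ maps a pair of the form $\left\{ r_i,\mathrm{E}\right\}$ to $\left\{ q_i,\mathrm{E}\right\}$; that is, once the $S_A$-part is collapsed to a single state, $\psi_i$ simulates the added edge $r_i\longrightarrow q_i$ while fixing $\mathrm{E}$. Since $A$ together with the added edges is strongly connected (this is the definition of $\mathcal{C}_j$), there is a path from $r_0$ to $r_I$ whose original edges you realize by original letters and whose added edges you realize by the corresponding $\psi_i$; ending with $\psi_I$ then merges $\left\{ r_I,\mathrm{E}\right\}$ into $\left\{ q_I\right\}$. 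The backward (swap-congruence) direction is untouched, so the reduction is correct for \emph{any} fixed $I$ and is trivially polynomial---which is what makes the paper's terse proof legitimate. One further detail worth a sentence in your write-up: Lemma~\ref{lem: bin} preserves strong connectivity of a PFA only when the last letter $a_m$ is everywhere defined, so in the careful branch you should order the alphabet with such a letter of $X_A$ last (if none exists, $A$ is not carefully synchronizing and you may output a fixed strongly connected binary non-synchronizing instance).
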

\begin{romanlist}
\item \noun{Subset synchronizability }restricted to binary strongly connected
DFA 
\item \noun{Careful synchronizability }restricted to binary strongly connected
PFA \end{romanlist}
\begin{proof}
There are polynomial reductions from the general problems \noun{Subset
synchronizability} and \noun{Careful synchronizability}: Perform the
construction from Lemma \ref{lem: sub sc} (or Lemma \ref{lem: car sc}
respectively) and then the one from Lemma \ref{lem: bin}.
\end{proof}

\subsection{Synchronization thresholds of NFA\label{sub:PFAvsNFA}}

In 1999, Imreh and Steinby \citep{IMR2} introduced three different
synchronization thresholds concerning general non-deterministic finite
automata (NFA). We define an NFA as a pair $A=\left(Q,X,\delta\right)$
where $Q$ is a finite set of states, $X$ is a finite alphabet and
$\delta:Q\times X\rightarrow2^{Q}$ is a total function, extended
in the canonical way to $\delta:Q\times X^{\star}\rightarrow2^{Q}$.
For any $S\subseteq Q$ we denote $\delta\!\left(S,w\right)=\bigcup_{s\in S}\delta\!\left(s,w\right)$.

The key definitions are the following. For an NFA $A$, a word $w\in X^{\star}$
is:
\begin{itemlist}
\item \emph{D1-directing} if there is $r\in Q$ such that $\delta_{A}\!\left(s,w\right)=\left\{ r\right\} $
for each $s\in Q$,
\item \emph{D2-directing }if $\delta_{A}\!\left(s_{1},w\right)=\delta_{A}\!\left(s_{2},w\right)$
for each $s_{1},s_{2}\in Q$,
\item \emph{D3-directing }if there is $r\in Q$ such that $r\in\delta_{A}\!\left(s,w\right)$
for each $s\in Q$.
\end{itemlist}
By $\mathrm{d_{1}}\!\left(A\right),\mathrm{d_{2}}\!\left(A\right),\mathrm{d_{3}}\!\left(A\right)$
we denote the length of shortest D1-, D2-, and D3-directing words
for $A$, or $0$ if there is no such word. By $\mathrm{d_{1}}\!\left(n\right),\mathrm{d_{2}}\!\left(n\right),\mathrm{d_{3}}\!\left(n\right)$
we denote the maximum values of $\mathrm{d_{1}}\!\left(A\right),\mathrm{d_{2}}\!\left(A\right),\mathrm{d_{3}}\!\left(A\right)$
taken over all NFA $A$ with at most $n$ states. Possible restrictions
are marked by superscripts as usual.

It is clear that PFA are a special kind of NFA. Any careful reset
word of a PFA $A$ is D1-, D2-, and D3-directing. On the other hand,
any D1- or D3-directing word of a PFA is a careful reset word. Thus,
we get
\[
\mathrm{d_{2}^{PFA}}\!\left(n\right)\le\mathrm{d_{1}^{PFA}}\!\left(n\right)=\mathrm{d_{3}^{PFA}}\!\left(n\right)=\car\!\left(n\right)
\]
and
\begin{equation}
\begin{array}{cc}
\mathrm{d_{1}}\!\left(n\right)\ge\car\!\left(n\right),\hspace{20bp} & \mathrm{d_{3}}\!\left(n\right)\ge\car\!\left(n\right).\end{array}\label{eq:car-and-d13}
\end{equation}
Note that a D2-directing word $w$ of a PFA $A$ is either a careful
reset word of $A$ or satisfies that $\delta\!\left(\left\{ s\right\} ,w\right)=\emptyset$
for each $s\in Q$. PFA are of a special importance for the threshold
$\mathrm{d_{3}}\!\left(n\right)$ since due to a key lemma from \citep{ITO1short},
for any $n$-state NFA $A$ there is a $n$-state PFA $B$ such that
$\mathrm{d_{3}}\!\left(B\right)\ge\mathrm{d_{3}}\!\left(A\right)$,
so we have $\mathrm{d_{3}}\!\left(n\right)=\mathrm{d_{3}^{PFA}}\!\left(n\right)$
for each $n$.

It is known that $\mathrm{d_{1}}\!\left(n\right)=\Omega\!\left(2^{n}\right)$
\citep{ITO1short} and $\mathrm{d_{3}}\!\left(n\right)=\Omega\!\left(3^{\frac{n}{3}}\right)$
\citep{MAR6} (for upper bounds and further details see \citep{GAZ1}).
Due to the easy relationship similar to (\ref{eq:car-and-d13}), our
strongly exponential lower bounds apply directly to the thresholds
$\mathrm{d_{1}}\!\left(n\right)$ and $\mathrm{d_{3}}\!\left(n\right)$
with the restriction to binary strongly connected NFA:
\[
\begin{array}{cc}
\mathrm{d_{1}^{\mathcal{AL}_{2}\cap\mathcal{SC}}}\!\left(n\right)=2^{\Omega\left(n\right)},\hspace{20bp} & \mathrm{d_{3}^{\mathcal{AL}_{2}\cap\mathcal{SC}}}\!\left(n\right)=2^{\Omega\left(n\right)}.\end{array}
\]

\subsection{Compositional depths}

It has been pointed out by Arto Salomaa \citep{SAL4} in 2001 that
very little is known about the minimum length of a composition needed
to generate a function by a given set of generators. To be more precise,
let us adopt and slightly extend the notation used in \citep{SAL4}.
We denote by $\mathcal{T}_{n}$ the semigroup of all functions from
$\left\{ 1,\dots,n\right\} $ to itself. Given $\mathbf{G}\subseteq\mathcal{T}_{n}$,
we denote by $\left\langle \mathbf{G}\right\rangle $ the subsemigroup
generated by $\mathbf{G}$. Given $\mathbf{F}\subseteq\mathcal{T}_{n}$,
we denote by $\Df\left(\mathbf{G},\mathbf{F}\right)$ the length $k$
of a shortest sequence $g_{1},\dots,g_{k}$ of functions from $\mathbf{G}$
such that $g_{1}\circ\dots\circ g_{k}\in\mathbf{F}$. Finally, denote
\begin{equation}
\D_{n}=\max_{\overline{n}\leq n}\max_{\substack{\mathbf{\mathbf{F},\mathbf{G}\subseteq\mathcal{T}_{\overline{n}}}\\
\mathbf{F}\cap\left\langle \mathbf{G}\right\rangle \neq\emptyset
}
}\Df\left(\mathbf{G},\mathbf{F}\right).\label{eq: def Dn}
\end{equation}
Note that in Group Theory, thresholds like $\D_{n}$ are studied in
the scope of permutations, see \citep{HEL1}. 

From basic connections between automata and transformation semigroups
it follows that various synchronization thresholds can be defined
alternatively by putting additional restrictions to the space of considered
sets $\mathbf{G}$ and $\mathbf{F}$ in the definition (\ref{eq: def Dn})
of the threshold $\D_{n}$:
\begin{romanlist}
\item For the basic synchronization threshold of DFA (may be denoted by
$\car^{\mathrm{DFA}}\!\left(n\right)$), we restrict $\mathbf{F}$
to be exactly the set of $\overline{n}$-ary constant functions. Recall
that a set $\mathbf{G}\subseteq\mathcal{T}_{\overline{n}}$ corresponds
to a DFA $A=\left(\left\{ 1,\dots\overline{n}\right\} ,X,\delta\right)$:
Each $g\in\mathbf{G}$ just encodes the action of certain $x\in X$.
Finding a reset word of $A$ then equals composing transitions from
$\mathbf{G}$ in order to get a constant.
\item For the threshold $\sub\!\left(n\right)$, we restrict $\mathbf{F}$
to be some of the sets 
\[
\mathbf{F}_{S}=\left\{ f\in\mathcal{T}_{\overline{n}}\mid\left(\forall r,s\in S\right)f\left(r\right)=f\left(s\right)\right\} 
\]
 for $S\subseteq\left\{ 1,\dots,\overline{n}\right\} $. Therefore
it holds that $\D_{n}\geq\sub\!\left(n\right).$
\item For $\car\!\left(n\right)$, we should consider an alternative formalism
for PFA, where the ,,undefined'' transitions lead to a special error
sink state. Let the largest number stand for the error state. A careful
reset word should map all the states except for the error state to
one particular non-error state. So, here we restrict 
\begin{eqnarray*}
\mathbf{F} & = & \left\{ f\in\mathcal{T}_{\overline{n}}\mid\left(\forall r,s\in\left\{ 1,\dots,\overline{n}-1\right\} \right)f\left(r\right)=f\left(s\right)\neq\overline{n}\right\} ,\\
\mathbf{G} & \subseteq & \left\{ g\in\mathcal{T}_{n}\mid g\!\left(n\right)=n\right\} .
\end{eqnarray*}
However, in the canonical formalism such $\mathbf{G}\subseteq\mathcal{T}_{n}$
corresponds to a $\left(n-1\right)$-state PFA, so we get $\D_{n}\geq\car\!\left(n-1\right).$
Allowing suitable sets $\mathbf{F}_{S}$ for $S\subseteq\left\{ 1,\dots,\overline{n}-1\right\} $,
we get $\D_{n}\geq\csub\!\left(n-1\right)$ as well.
\end{romanlist}
Arto Salomaa refers to a single nontrivial bound of $\D_{n}$, namely
$\D_{n}\geq\left(\sqrt[3]{n}\right)!$. In fact, he omits a construction
of Kozen \citep[Theorem 3.2.7]{KOZ1} from 1977, which deals with
lengths of \emph{proofs} rather than compositions but witnesses easily
that $\D_{n}=2^{\Omega\left(\frac{n}{\log n}\right)}$.\textbf{ }However,
the lower bound of $\car\!\left(n\right)$ from \citep{ITO1short}
revealed soon that\textbf{ $\D_{n}=2^{\Omega\left(n\right)}$}.\textbf{ }

Like in the case of $\car\!\left(n\right)$ and $\sub\!\left(n\right)$,
the notion of $\D_{n}$ does not concern the size of $\mathbf{G}$,
thus providing a ground for artificial series of bad cases based on
growing alphabets. Our results show that actually the growing size
of $\mathbf{G}$ is not necessary: a strongly exponential lower bound
of $\D_{n}$ holds even if we restrict $\mathbf{G}$ to any nontrivial
fixed size.

\section{Conclusions and Future Work\label{sec:Conclusions-and-future}}

We have proved that both the considered thresholds (subset synchronization
of DFA and careful synchronization of PFA) are strongly exponential
even under two heavy restrictions (binary alphabets and strong connectivity).
We have improved the lower bounds of Martyugin, 2013 \citep{MAR5}.
However, the multiplicative constants in the exponents do not seem
to be the largest possible. 

For now there is no method giving upper bounds concerning the alphabet
size, so it may happen that binary cases are the hardest possible.
Such situation appears in the classical synchronization of DFA if
the \v{C}ern\'{y} Conjecture holds. 

From a more general viewpoint, our results give a partial answer to
the informal question: \emph{,,Which features of automata are needed
for obtaining strongly exponential thresholds?}'' However, for many
interesting restrictions we do not even know whether the corresponding
thresholds are superpolynomial. Namely, such restricted classes include
monotonic and aperiodic automata, cyclic and one-cluster automata,
Eulerian automata, commutative automata and others. For each such
class it is also an open question whether \noun{Subset synchronizability
}or \noun{Careful synchronizability }is solvable in polynomial time
with the corresponding restriction.

As it was noted before, for the general threshold $\car\!\left(n\right)$
there is a gap between $\mathcal{O}\!\left(n^{2}\cdot4^{\frac{n}{3}}\right)$
and $\Omega\!\left(3^{\frac{n}{3}}\right)$, which is subject to an
active research.

\bibliographystyle{ijfcs/ws-ijfcs}
\bibliography{C:/Users/Vojta/Desktop/SYNCHRO2/bib/ruco}

\begin{thebibliography}{10}

\bibitem{ADL1}
R.~Adler, L.~Goodwyn and B.~Weiss, Equivalence of topological {M}arkov shifts,
  {\em Israel Journal of Mathematics} {\bf 27}(1)  (1977)  49--63.

\bibitem{BER4}
M.~V. Berlinkov, Approximating the minimum length of synchronizing words is
  hard, {\em Theory of Computing Systems} {\bf 54}(2)  (2014)  211--223.

\bibitem{BER6}
M.~V. Berlinkov, On two algorithmic problems about synchronizing automata, {\em
  Developments in Language Theory\/},  eds. A.~Shur and M.~Volkov, {\em Lecture
  Notes in Computer Science} {\bf 8633} (Springer International Publishing,
  2014), pp. 61--67.

\bibitem{BON1}
P.~Bonizzoni and N.~Jonoska, Existence of constants in regular splicing
  languages, {\em Information and Computation} {\bf 242}  (2015)  340 -- 353.

\bibitem{BUR1}
H.~Burkhard, Zum {L}{\"{a}}ngenproblem homogener {E}xperimente an
  determinierten und nicht-deterministischen {A}utomaten, {\em Elektronische
  Informationsverarbeitung und Kybernetik} {\bf 12}(6)  (1976)  301--306.

\bibitem{CER2}
J.~{\v{C}}ern\'{y}, A.~Pirick\'a and B.~Rosenauerov\'a, On directable automata,
  {\em Kybernetica} {\bf 7}  (1971)  289--298.

\bibitem{DEY1}
B.~Delyon and O.~Maler, On the effects of noise and speed on computations, {\em
  Theoretical Computer Science} {\bf 129}(2)  (1994)  279 -- 291.

\bibitem{stm1}
C.~Flye Sainte-Marie, Solution to question nr. 48, {\em L'interm{\'e}diaire des
  Math{\'e}maticians} {\bf 1}  (1894)  107--110.

\bibitem{GAZ1}
Z.~Gazdag, S.~Iv\'{a}n and J.~Nagy-Gy\"{o}rgy, Improved upper bounds on
  synchronizing nondeterministic automata, {\em Inf. Process. Lett.} {\bf 109}
  (August 2009)  986--990.

\bibitem{GON1}
F.~Gonze, R.~M. Jungers and A.~Trakhtman, A note on a recent attempt to improve
  the pin-frankl bound, {\em Discrete Mathematics {\&} Theoretical Computer
  Science} {\bf 17}(1)  (2015)  307--308.

\bibitem{GK1}
P.~Goral\v{c}\'{i}k, Z.~Hedrl\'{i}n, V.~Koubek and J.~Ry\v{s}linkov\'{a}, A
  game of composing binary relations, {\em RAIRO - Theoretical Informatics and
  Applications - Informatique Theorique et Applications} {\bf 16}(4)  (1982)
  365--369.

\bibitem{HEL1}
H.~A. Helfgott and {\'{A}}.~Seress, On the diameter of permutation groups, {\em
  Annals of Mathematics} {\bf 179}  (2014)  611--658.

\bibitem{IMR2}
B.~Imreh and M.~Steinby, Directable nondeterministic automata, {\em Acta
  Cybern.} {\bf 14} (February 1999)  105--115.

\bibitem{ITO1short}
M.~Ito and K.~Shikishima-Tsuji, Some results on directable automata, {\em
  Theory Is Forever\/},  eds. J.~Karhumäki {\em et~al.}, {\em Lecture Notes in
  Computer Science} {\bf 3113} (Springer Berlin Heidelberg, 2004), pp.
  125--133.

\bibitem{KOZ1}
D.~Kozen, Lower bounds for natural proof systems, {\em Proceedings of the 18th
  Annual Symposium on Foundations of Computer Science\/},  (IEEE Computer
  Society, Washington, 1977), pp. 254--266.

\bibitem{LY1}
D.~Lee and M.~Yannakakis, Testing finite-state machines: state identification
  and verification, {\em Computers, IEEE Transactions on} {\bf 43}(3)  (1994)
  306--320.

\bibitem{MAR4}
P.~V. Martugin, Computational complexity of certain problems related to
  carefully synchronizing words for partial automata and directing words for
  nondeterministic automata, {\em Theory of Computing Systems} {\bf 54}(2)
  (2014)  293--304.

\bibitem{MAR6}
P.~V. Martyugin, A lower bound for the length of the shortest carefully
  synchronizing words, {\em Russian Mathematics} {\bf 54}(1)  (2010)  46--54.

\bibitem{MAR5}
P.~V. Martyugin, Careful synchronization of partial automata with restricted
  alphabets, {\em Computer Science - Theory and Applications\/},  eds. A.~A.
  Bulatov and A.~M. Shur, {\em Lecture Notes in Computer Science} {\bf 7913}
  (Springer Berlin Heidelberg, 2013), pp. 76--87.

\bibitem{NAT1}
B.~K. Natarajan, An algorithmic approach to the automated design of parts
  orienters, {\em Proceedings of the 27th Annual Symposium on Foundations of
  Computer Science\/},  (IEEE Computer Society, Washington, 1986), pp.
  132--142.

\bibitem{OLS1}
J.~Olschewski and M.~Ummels, The complexity of finding reset words in finite
  automata, {\em Mathematical Foundations of Computer Science 2010\/},  eds.
  P.~Hlin\v{e}n\'{y} and A.~Ku\v{c}era, {\em Lecture Notes in Computer Science}
  {\bf 6281} (Springer Berlin Heidelberg, 2010), pp. 568--579.

\bibitem{PIN2}
J.-E. Pin, On two combinatorial problems arising from automata theory, {\em
  Annals of Discrete Mathematics} {\bf 17}  (1983)  535--548.

\bibitem{SAL4}
A.~Salomaa, Compositions over a finite domain: From completeness to
  synchronizable automata, {\em A Half-century of Automata Theory\/},  (World
  Scientific Publishing Co., Inc., River Edge, NJ, USA, 2001), pp. 131--143.

\bibitem{SAN1short}
S.~Sandberg, Homing and synchronizing sequences, {\em Model-Based Testing of
  Reactive Systems\/},  eds. M.~Broy {\em et~al.}, {\em Lecture Notes in
  Computer Science} {\bf 3472} (Springer Berlin Heidelberg, 2005), pp. 5--33.

\bibitem{TRA6ijm}
A.~Trahtman, The road coloring problem, {\em Israel Journal of Mathematics}
  {\bf 172}(1)  (2009)  51--60.

\bibitem{TRA1}
A.~Trahtman, Modifying the upper bound on the length of minimal synchronizing
  word, {\em Fundamentals of Computation Theory\/},  eds. O.~Owe, M.~Steffen
  and J.~A. Telle, {\em Lecture Notes in Computer Science} {\bf 6914} (Springer
  Berlin Heidelberg, 2011), pp. 173--180.

\bibitem{TRS1}
N.~Travers and J.~Crutchfield, Exact synchronization for finite-state sources,
  {\em Journal of Statistical Physics} {\bf 145}(5)  (2011)  1181--1201.

\bibitem{VOL1short}
M.~V. Volkov, Synchronizing automata and the \v{C}ern{\'y} conjecture, {\em
  Language and Automata Theory and Applications\/},  eds. C.~Mart\'{i}n-Vide
  {\em et~al.}, {\em Lecture Notes in Computer Science} {\bf 5196} (Springer
  Berlin Heidelberg, 2008), pp. 11--27.

\end{thebibliography}

\end{document}